\definecolor{codegreen}{rgb}{0,0.6,0}
\definecolor{codegray}{rgb}{0.5,0.5,0.5}
\definecolor{codepurple}{rgb}{0.58,0,0.82}
\definecolor{backcolour}{rgb}{0.95,0.95,0.92}
\lstdefinestyle{mystyle}{
	backgroundcolor=\color{backcolour},   
	commentstyle=\color{codegreen},
	keywordstyle=\color{magenta},
	numberstyle=\tiny\color{codegray},
	stringstyle=\color{codepurple},
	basicstyle=\ttfamily\footnotesize,
	breakatwhitespace=false,         
	breaklines=true,                 
	captionpos=b,                    
	keepspaces=true,                 
	numbers=left,                    
	numbersep=5pt,                  
	showspaces=false,                
	showstringspaces=false,
	showtabs=false,                  
	tabsize=2
}
\def\N{\mathbb N}
\def\T{\mathbb T}
\def\X{\mathbb X}
\def\E{\mathbb E}
\def\P{\mathbb P}
\def\O{\mathcal O}
\def\ub#1{%
	\leavevmode\hbox{%
		\setbox0\hbox{#1}\dp0 0pt
		\vrule height.5ex width.4pt depth.33333ex \kern-.4pt
		\vtop{\hbox{\kern.15em \box0\kern.15em}\kern.33333ex \hrule}%
		\kern-.4pt \vrule height.5ex width.4pt depth.33333ex \kern-.4pt
	}%
}
\def\underparen#1{\mathop{\vtop{\m@th\ialign{##\crcr
				$\hfil\displaystyle{#1}\hfil$\crcr
				\noalign{\kern3\p@\nointerlineskip}%
				\upparenfill\crcr\noalign{\kern3\p@}}}}\limits}
\def\upparenfill{$\m@th \setbox\z@\hbox{$\braceld$}%
	\bracelu\leaders\vrule \@height\ht\z@ \@depth\z@\hfill\braceru$}
\newcommand{\bbN}{\mathbb{N}}
\newtheorem{proposition}{Proposition}
\newtheorem{theorem}{Theorem}
\newtheorem{lemma}{Lemma}
\newtheorem{corollary}{Corollary}
\newtheorem{conjecture}{Conjecture}
\theoremstyle{definition}
	\title{On a probabilistic extension of the Oldenburger-Kolakoski sequence} 
	\author{Chlo\'e Boisson\footnote{\'Ecole Normale Supérieure de Lyon, 15 parvis René Descartes, F-69342, Lyon, France.}, Damien Jamet\footnote{Université de Lorraine, Loria, UMR 7503, Vandœuvre-lès-Nancy, F-54506, France.} and Ir\`ene Marcovici\footnote{Université de Lorraine, CNRS, Inria, IECL, F-54000 Nancy, France.}
	\\ {\footnotesize \url{chloe.boisson@ens-lyon.fr}, \url{damien.jamet@loria.fr}, \url{irene.marcovici@univ-lorraine.fr}}}
\date{\today}
\begin{document}

\maketitle

\begin{abstract}
		The Oldenburger-Kolakoski sequence is the only infinite sequence over the alphabet $\{1,2\}$ that starts with 1 and is its own run-length encoding. 
		In the present work, we take a step back from this largely known and studied sequence by introducing some randomness in the choice of the letters written. This enables us to provide some results on the convergence of the density of $1$'s in the resulting sequence. When the choice of the letters is given by an infinite sequence of i.i.d. random variables or by a Markov chain, the average densities of letters converge. Moreover, in the case of i.i.d. random variables, we are able to prove that the densities even almost surely converge.  
	\end{abstract}
	
		\section{Introduction}
	
	The Oldenburger-Kolakoski sequence $122112122122112\dots$ introduced by R. Oldenburger \cite{RO1939} and lately mentioned by W. Kolakoski \cite{WK1966} is the unique sequence  $x_1x_2x_3\dots$ over the alphabet $\{1,2\}$ with $x_1 = 1$ and whose $k$-th block has length $x_k$ for $k \in \bbN^\star$.
	
	In \cite{Keane91} M.S. Keane asked whether the density of $1$'s in this sequence is $1/2$. In \cite{chvatal93}, V.~Chv{\'a}tal showed that the upper density of $1$'s (resp. $2$'s) is less than 0.50084. This bound has been slightly improved by M. Rao but Keane's question still stands: << Is the density of $1$'s in Oldenburger-Kolakoski sequence defined and equal to $0.5$? >>
	
	By definition, the Oldenburger-Kolakoski sequence $\mathcal{O} = (x_n)_{n \in \bbN\star}$ is a fixed point of the run-length encoding operator denoted $\Delta$:
	
	\begin{eqnarray}\label{eq::block}
	\stackunder{\mathcal{O}}{\Delta(\mathcal{O})} &	\stackunder{=}{=}	&	 \underparen{1}_{1}  \underparen{22}_{2} \underparen{11}_{2} \underparen{2}_{1} \underparen{1}_{1} \underparen{22}_2 \underparen{1}_1 \underparen{22}_2 \underparen{11}_2 \dots = 1^1 2^2 1^2 2^1 1^1 2^2 1^1 2^2 1^2 \dots \\
	\mathcal{O} &	=	&	1^{x_1}2^{x_2}1^{x_3}2^{x_4}1^{x_5}2^{x_6}1^{x_7}2^{x_8}1^{x_9} = \prod_{n \in \bbN} (1^{x_{2n+1}}2^{x_{2n+2}})
	\end{eqnarray}

	In \cite{RO1939}, R. Oldenburger refers to sequences over an alphabet $\Sigma$ as \emph{trajectories} and refers to the	sequence $\Delta(w)$ as the \emph{exponent trajectory} of the trajectory $w$. He stated that << a periodic trajectory is distinct from its exponent trajectory >> (Theorem 2, \cite{RO1939}) and, therefore, the Oldenburger-Kolakoski sequence is not periodic.
		
	The Oldenburger-Kolakoski sequence is also connected to differentiable words, $C^\infty$-words and smooth words \cite{BBC05, BDLV06, Dek80}. %
	A sequence $w$ over the alphabet $\Sigma\subset \mathbb{N}^\star$ is \emph{differentiable} if and only if $\Delta(w)$ is also defined over the same alphabet $\Sigma$. The sequence $\Delta(w)$ is called the \emph{derivative sequence} of $w$~\cite{Dek80}. A \emph{$C^\infty$-word}, or \emph{smooth word}, is an infinitely differentiable sequence. Obviously, the Oldenburger-Kolakoski sequence is a $C^\infty$-word since it is a fixed-point of the run-length encoding operator $\Delta$.
	
	

	Although not answering Keane's question fully, F.M. Dekking established connections  between possible combinatorial properties of the Oldenburger-Kolakoski sequence \cite{Dek80}: if the Oldenburger-Kolakoski sequence is closed by complementation (that is, if $w$ occurs in $\mathcal{O}$ then so does $\widetilde{w}$ with $\widetilde{1}=2$ and $\widetilde{2}=1$) then it is recurrent (any word that occurs in $\mathcal{O}$ does so infinitely often) (Prop. 1, \cite{Dek80}). Moreover, the Oldenburger-Kolakoski sequence is closed by complementation if and only if it contains every finite $C^\infty$-word (Prop. 2, \cite{Dek80}).
	
A few years later, A. Carpi stated that the Oldenburger-Kolakoski sequence contains only a finite set of squares (words of the form $xx$ where $x$ is not empty) and does not contain any cube (word of the form $xxx$ where $x$ is not empty) \cite{Carpi93}. Hence, since $\mathcal{O}$ contains only squares of bounded length then it cannot be the fixed point of a non degenerated morphism: the image of a square $w=xx$ by such a morphism is still a square longer than $w$.

	There exist several ways to extend the definition of the Oldenburger-Kolakoski sequence, depending on whether one wants to preserve the fixed point property or to follow the construction scheme without requiring the resulting sequence to be a fixed point for the run-length encoding operator $\Delta$. For instance, one can deal with other alphabets and thus construct \emph{Generalized Oldenburger-Kolakoski} sequence (GOK-sequence for short) as follows: for any pair $(a,b)$ of non-zero natural numbers, there exists a unique fixed point $\mathcal{O}_{a,b}$ of $\Delta$ over the alphabet $\{a,b\}$ starting with $a$. 
	Also, according to this notation, the original Oldenburger-Kolakoski sequence is $\mathcal{O}_{1,2}$.
	For instance, if $a=1$ and $b = 3$, the first terms of $\mathcal{O}_{1,3}$ are:
	\begin{equation}\label{eq::O_13}
	\mathcal{O}_{1,3} = \underparen{1}_{1} \underparen{333}_{3} \underparen{111}_{3} \underparen{333}_{3} \underparen{1}_1 \underparen{3}_1 \underparen{1}_1 \underparen{333}_3 \dots = 1^1 3^3 1^3 3^3 1^1 3^1 1^1 3^3\dots 
	\end{equation}

	A significant result is, unlike the case of the original Oldenburger-Kolakoski sequence, that the densities of $1$'s in $\mathcal{O}_{1,3}$ and $\mathcal{O}_{3,1}$ are known and approximately 0.3972  \cite{Sing04}.
	
Generalized Oldenburger-Kolakoski sequences are also connected with smooth words over arbitrary alphabets \cite{BertheBC05, BJP08}.
As for the (Generalized) Oldenburger-Kolakoski sequences, the properties of smooth words are better known for alphabets with letters of the same parity: for instance, while the frequency of letters in an infinite smooth word over $\{1,2\}$ is still unsolved, in \cite{BJP08} the authors showed that the frequency of letters for extremal smooth words (for the lexicographic order) over the alphabet $\{a,b\}$, where $a$ and $b$ are both even, is 0.5. They also computed the frequency for extremal smooth words over alphabets of type $\{1,b\}$, where $b$ is odd.
Moreover, if $a$ and $b$ have the same parity, then every infinite smooth word over the alphabet $\{a,b\}$ is recurrent \cite{BJP08}. Also, if $a$ and $b$ are both odd, then every infinite smooth word is closed under reversal but not under complementation \cite{BJP08}. On the other hand, if $a$ and $b$ are both even, then the extremal smooth words over the alphabet $\{a,b\}$ are neither closed under reversal nor closed under complementation \cite{BJP08}.

	For a more detailed survey on the Oldenburger-Kolakoski sequence and on generalizations over arbitrary two letter alphabets see \cite{Sing11}.

	\section{Extending the construction scheme to any directing sequence\label{sec:section2}}
		
\subsection{Notion of directing sequence.}
		
	In the construction scheme of a Generalized Oldenburger-Kolakoski sequence, the blocks of $\mathcal{O}_{a,b}$ are composed, alternatively, of $a$'s and $b$'s as shown in (\ref{eq::block}) when $a=1$ and $b=2$ and in (\ref{eq::O_13}) when $a=1$ and $b=3$. In other words, if $t_1=a$ and $t_2=b$, then <<~the $i^{\mathrm{th}}$ block of $\mathcal{O}_{a,b}=(x_i)_{i \in \mathbb{N}^\star}$ is of length $x_i$ and is filled with the letter $t_{i \mod 2}$~>>.
	 
	This construction scheme is clearly extendable to any finite sequences $T  = (t_1, t_2, \dots)$ over $\{a,b\}$ as follows: <<~the $i^{\mathrm{th}}$ block of $\mathcal{O}_T=(x_i)_{i \in \mathbb{N}^\star}$ is of length $x_i$ and is filled with the letter $t_i$~>> (see Program \ref{lst::liste1}). 

	\begin{minipage}{\linewidth}	
	\begin{lstlisting}[caption={\texttt{Python} function: $\mathcal{O}$ is an operates on sequences over $\mathbb{N}^\star$.},language=Python, label={lst::liste1}]
def O(T): 
	X = []
	k = 0
	for x in T:
		X += [x] # concatenate 'x' at the end of X
		X += [x]*(X[k]-1) # concatenate 'X[k]-1' copie(s) of x 
		k += 1
	return X
	\end{lstlisting}
	\end{minipage}

	We say that the sequence $\mathcal{O}_T$ is \emph{directed} by the sequence $T$ and the sequence $T$ is a \emph{directing sequence} of $\mathcal{O}_T$. 
	For instance, the sequence $\mathcal{O}_{a,b}$ is directed by $T=(ab)^\omega$ while $\mathcal{O}$ is directed by $(12)^\omega$.	Notice that the directed sequence $\mathcal{O}_T$ may no longer be a fixed point of the operator $\Delta$. 
	
		Let us now take a closer look at how the construction of  $\mathcal{O}_T$ provides a little more information than the sequence itself.  For instance, let $T=(t_n)_{n \in \bbN^\star} =21122\dots$ be a sequence over $\{1,2\}$:
	\begin{enumerate}[label={\textbf{Step \arabic*:}}]
		\item $\mathcal{O}_{t_1}=22$ and the second block is of length $2$: hence the \nth{3} and \nth{4} letters are in a same block of length 2. Let us denote $\mathcal{O}_{t_1} = \ub{22}\,\ub{??}$.
		\item $\mathcal{O}_{t_1 t_2}= \ub{22}\,\ub{11}\, \ub{?} \, \ub{?} $: the \nth{5} and the \nth{6} letter are respectively in blocks of length $1$.
		\item $ = \ub{22}\,\ub{11}\, \ub{1} \, \ub{?} \, \ub{?}$: the \nth{7} letter is in a block of length $1$.
	\end{enumerate}
	Roughly speaking, $t_1$ gives the length of all the blocks that contain up to the \nth{4} letter, $t_1t_2$ gives the length of all the blocks that contain up the \nth{6} letter and so on\dots~Let $w_n = \mathcal{O}_{t_1 \dots t_n}$ for each $n \in \bbN^\star$ , then 
	$\sum^{|w_1|}_{i = 1} { [w_1]_i} = 4$ , $\sum^{|w_2|}_{i = 1} { [w_2]_i} = 6$, $\sum^{|w_3|}_{i = 1} { [w_3]_i} = 7$\dots\\
	
\subsection{Partitions of the set of directing sequences}

We now introduce some subsets of directing sequences that will be crucial in the following sections. For this purpose, let us classify the sequences $T$ according to the information they provide on the length of the blocks of $\mathcal{O}_T$: let $k$ and $n$ be two integers such that $1 \leq k \leq n$ and let $\mathcal{S}_{n,k}$ be the set of sequences $T = (t_n)_{n \in \mathbb{N}}$ such that the length of the block of $\mathcal{O}_T$ containing its $n$\textsuperscript{th} letter is known when reading $t_k$ but not before. Formally, if $w_n = \mathcal{O}_{t_1 \dots t_n}$, then we have
	\begin{equation}\label{dfn::Snk}
	\mathcal{S}_{n,k} = \Big\{(t_1,...,t_n) \in \{1,2\}^n : \min \{j \in \llbracket 1;n \rrbracket, \sum^{|w_j|}_{i = 1} { [w_j]_i} \geq n \} = k\Big\}.
	\end{equation}
	
	Let us give a short example to illustrate the latter definition:  let $n = 5$ and let $T = (t_1,t_2,t_3,t_4,t_5) = (2,1,1,2,2)$. After the first step and the reading of $t_1$, we only know that $t_1=2$ and still do not know the length of the block that will contain the \nth{5} letter of $\mathcal{O}_T$. On the other hand, after having taken knowledge of the value of $t_2$, we know that the \nth{5} letter of $\mathcal{O}_T$ will be written in a block of size 1. Hence, $T \in \mathcal{S}_{5,2}$. More generally, 
	
	\begin{enumerate}[label={\textbf{Step \arabic*:}}]
	\item $\mathcal{O}_{t_1} = \ub{22}\,\ub{??}$ and $T \in \mathcal{S}_{3,1} \cap \mathcal{S}_{4,1}$ since $t_1$ provides the length of the block containing the \nth{3} and the \nth{4} of $\mathcal{O}_T$.
	\item $\mathcal{O}_{t_1 t_2} = \ub{22}\,\ub{11}\, \ub{?} \, \ub{?} $: the \nth{5} and the \nth{6} letter are respectively in blocks of length $1$. Hence $T \in \mathcal{S}_{5,2} \cap \mathcal{S}_{6,2}$.
	\item $\mathcal{O}_{t_1 t_2 t_3} = \ub{22}\,\ub{11}\, \ub{1} \, \ub{?} \, \ub{?}$ and  $T \in \mathcal{S}_{7,3}$.
	\item $\mathcal{O}_{t_1 t_2 t_3 t_4}  = \ub{22}\,\ub{11}\, \ub{1} \, \ub{2} \, \ub{?} \, \ub{??}$ and  $T \in \mathcal{S}_{8,4} \cap \mathcal{S}_{9,4}$.
	\item $\mathcal{O}_{t_1 t_2 t_3 t_4 t_5}  = \ub{22}\,\ub{11}\, \ub{1} \, \ub{2} \, \ub{2} \, \ub{??} \, \ub{??}$ and  $T \in \mathcal{S}_{10,5} \cap \mathcal{S}_{11,5}$.
	\end{enumerate}
	
	The set  $\{\mathcal{S}_{n,k} : k \in \llbracket 1;n \rrbracket \}$ is a partition of $\{1,2\}^n$. Indeed, the length of the word $w_n$  is at least $n$, 
	since at each step, after reading $t_i$, we write at least one letter of $\mathcal{O}_T$. And each letter is 1 or 2, so $\sum^{|w_n|}_{i = 1} { [w_n]_i} \geq n$. 
	
	Let us notice that $ \mathcal{S}_{n,n} = \{(1,...,1), (1,...,1,2)\}$. Indeed, the length of $w_{n-1}$ is at least $n-1$, and is exactly $n-1$ if and only if the written letters are all 1's. Moreover $\mathcal{S}_{n,n-1} = \{(1,...,1,2,1),(1,...,1,2,2)\}$. Indeed, one easily check that $(1,...,2,1),(1,...,2,2) \in \mathcal{S}_{n,n-1}$. Reciprocally, if $(t_1,\dots,t_n) \in \mathcal{S}_{n,n-1}$, then there exists $i \in \llbracket 1 ; n-1 \rrbracket$ such that $t_i = 2$. Let us note that the first $2$ in $T$ is written twice. Thus, if there were such a $i$ in $\llbracket 1 ; n-2 \rrbracket$ we would have $\sum^{|w_{n-2}|}_{i = 1} { [w_{n-2}]_i} \geq n$.

\subsection{Extension of the definition to infinite directing sequences}
	
Now that we have started studying the notions of directing and directed sequences, a natural question arises: << Can one extend the definition of $\mathcal{O}_T$ sequences to infinite (possibly not periodic) sequences $T$? >> 
	Let $\mathcal{A} \subseteq \bbN^\star$ be an alphabet and let $T=(t_n)_{n \in \bbN^\star}$ be an infinite sequence over $\mathcal{A}$.
	By construction, $\mathcal{O}_{t_1\dots t_n}$ is a prefix of $\mathcal{O}_{t_1\dots t_{n+1}}$for each $n \in \bbN^\star$. One thus defines $\mathcal{O}_T$ as the limit of $\mathcal{O}_{t_1\dots t_n}$ when $n$ tends to infinity:   $\mathcal{O}_T = \displaystyle \lim_{n \to \infty } \mathcal{O}_{t_1\dots t_n}$. 
	
	The present work deals with the densities of letters in $\mathcal{O}_T$ when $T=(t_n)_{n \in \bbN^\star}$ is an infinite sequence over $\mathcal{A}$. Do these densities exist? If so, how much are they value?
	
	We are especially interested in the case where the directing sequence is random. Let $\mathbb{T} = (T_n)_{n \in \bbN^\star}$ be a sequence of random variables. By definition, the sequence directed by $\mathbb{T}$ is the random sequence $\mathbb{X} = (X_n)_{n \in \bbN^\star}$ defined by $\mathbb{X}=\mathcal{O}_{{\mathbb T}}$.
	
	The present paper is organized as follows. In section \ref{sec::i.i.d.}, we consider the case where the directing sequence is made of independent and identically distributed (i.i.d.) random variables over a two-letter alphabet. In section \ref{sec::markov}, we treat the case where the random sequence $\mathbb{X}$ is directed by a Markov chain.

	\section{Sequence directed by independent random variables\label{sec::i.i.d.}}

	In the present section, $\mathbb{T} = (T_n)_{n \in \bbN^\star}$ is a sequence of independent and identically distributed random variables (i.i.d. for short) over the two-letter alphabet $\mathcal{A} = \{1,2\}$, with $\mathbb{P}(T_n = 1)=p$ and $\mathbb{P}(T_n = 2)=1-p$ for each $n \in \bbN^\star$, where $p$ is a fixed parameter in $]0,1[$. The sequence $\mathbb{T}$ is thus distributed according to the \emph{product distribution} $(p\delta_1+(1-p)\delta_2)^{\otimes {\N^\star}}$. We denote $\T \sim ((p\delta_1+(1-p)\delta_2)^{\otimes {\N^\star}}$. 

Let $\mathbb{X} = (X_n)_{n \in \bbN^\star} $ be the sequence directed by $\mathbb{T}$. The sequence $\mathbb{X}$ is a random sequence with \textit{a priori} unknown distribution. Assume that one wants to compute the $n$\textsuperscript{th} letter $X_n$ of $\mathbb X$, for some large integer $n$. Unless the sequence $\mathbb T$ begins with a long succession of $1$'s (an event which has a low probability to occur), one just has to read the first terms $T_1,\ldots,T_k$ of $\mathbb T$, until knowing the length and position of the block containing $X_n$, and to fill that block by $1$'s with probability $p$, or by $2$'s with probability $1-p$. The resulting value of $X_n$ obtained that way will have the desired distribution. This leads to the fact that $\lim_{n\to\infty} {\mathbb P}(X_n=1)=p$, and we can even use this observation to compute more precisely the value of ${\mathbb P}(X_n=1)$, as shown in the following proposition.
	
\begin{proposition}\label{prop:proba}
If $\T \sim (p\delta_1+(1-p)\delta_2)^{\otimes {\N^\star}}$ with $p\in]0,1[$, then for any $n \geq 2$, 
$$\displaystyle \mathbb{P}{(X_n = 1)} = p(1-p^{n-2}+p^{n-1}).$$
\end{proposition}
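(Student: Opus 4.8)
The plan is to follow the block structure of $\mathbb{X} = \mathcal{O}_{\mathbb{T}}$: by construction the $i$-th block $B_i$ of $\mathbb{X}$ consists of a single letter $T_i$ repeated $X_i$ times, where $X_i = [\mathbb{X}]_i$ is simultaneously the $i$-th letter of $\mathbb{X}$ and the length of $B_i$. Hence, if $L = L(n)$ denotes the (random) index of the block containing the $n$-th position, then $X_n = T_L$, and the task reduces to understanding the joint law of $L$ and $T_L$. Two elementary facts about the construction (Program~\ref{lst::liste1}) will be used repeatedly: (a) each $X_i$ is a deterministic function of $T_1,\dots,T_i$ --- indeed $X_i = T_j$, where $j\le i$ is the index of the block containing position $i$; and (b) this dependence involves $T_i$ itself only when $T_1=\dots=T_{i-1}=1$, in which case $\mathbb{X}$ begins with $1^{i-1}$ and $X_i=T_i$. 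I would then introduce $M=\inf\{m\ge 1:T_m=2\}$, the position of the first $2$ in $\mathbb{T}$, which is a.s.\ finite with $\mathbb{P}(M=m)=p^{m-1}(1-p)$; by~(b), on $\{M\ge m\}$ the word $\mathbb{X}$ starts with $1^{m-1}$, and if $M=m$ then $X_m=T_m=2$, so $B_m=22$ and $\mathbb{X}$ starts with $1^{m-1}22$.

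Next I would split according to the value of $M$; these three cases are exactly $(T_1,\dots,T_n)\in\mathcal{S}_{n,n}$, $(T_1,\dots,T_n)\in\mathcal{S}_{n,n-1}$, and $(T_1,\dots,T_n)\in\bigcup_{k\le n-2}\mathcal{S}_{n,k}$, respectively. If $M\ge n$, i.e.\ $(T_1,\dots,T_n)\in\mathcal{S}_{n,n}=\{1^n,1^{n-1}2\}$, then $\mathbb{X}$ starts with $1^{n-1}$, so position $n$ is the first letter of $B_n$ and $X_n=T_n$; this contributes $\mathbb{P}(X_n=1,\,M\ge n)=\mathbb{P}(T_1=\dots=T_{n-1}=1,\,T_n=1)=p^n$. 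If $M=n-1$, i.e.\ $(T_1,\dots,T_n)\in\mathcal{S}_{n,n-1}=\{1^{n-2}21,1^{n-2}22\}$, then $\mathbb{X}$ starts with $1^{n-2}22$, so position $n$ lies in $B_{n-1}$ and $X_n=T_{n-1}=2$; this case contributes $0$.

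The remaining case $M\le n-2$ is the crux. Here $\mathbb{X}$ starts with $1^{M-1}22$, so the blocks $B_1,\dots,B_M$ cover only the positions $1,\dots,M+1\le n-1$; therefore the block $B_L$ containing position $n$ has index $L>M$. Since $T_M=2$ with $M<L$, the first $L-1$ letters of $\mathbb{T}$ are not all equal to $1$, so by~(b) the length $X_L$ --- hence also $L$, which is read off from the partial sums $X_1+\dots+X_j$ --- is a function of $T_1,\dots,T_{L-1}$ only. I would use this to show that for every $\ell$ the event $\{L=\ell\}\cap\{M\le n-2\}$ is $\sigma(T_1,\dots,T_{\ell-1})$-measurable, hence independent of $T_\ell$, which gives $\mathbb{P}(X_n=1,\,L=\ell,\,M\le n-2)=p\,\mathbb{P}(L=\ell,\,M\le n-2)$; summing over $\ell$ yields $\mathbb{P}(X_n=1,\,M\le n-2)=p\,\mathbb{P}(M\le n-2)=p\bigl(1-p^{n-1}-p^{n-2}(1-p)\bigr)=p(1-p^{n-2})$. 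Adding the three contributions gives $\mathbb{P}(X_n=1)=p^n+p(1-p^{n-2})=p(1-p^{n-2}+p^{n-1})$. The one delicate point is precisely this last measurability claim: one must verify that, once the leading run of $1$'s in $\mathbb{T}$ has been exhausted, the random index $L$ never ``looks at'' the letter $T_L$, which rests on fact~(b) together with the observation that a prefix $1^{m-1}2$ of $\mathbb{T}$ generates no more than the prefix $1^{m-1}22$ of $\mathbb{X}$.
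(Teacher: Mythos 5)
Your proof is correct, and its skeleton is parallel to the paper's: your three cases $M\ge n$, $M=n-1$, $M\le n-2$ are exactly the paper's split into $\mathcal{S}_{n,n}$, $\mathcal{S}_{n,n-1}$ and $\bigcup_{k\le n-2}\mathcal{S}_{n,k}$, with the same boundary contributions $p^n$ and $0$. Where you genuinely diverge is in the main case. The paper keeps the fine partition into the sets $\mathcal{S}_{n,k}$, $k\le n-2$ (the step at which the length of the block containing position $n$ becomes known), and its key lemma is that the step $k'$ at which that block is filled satisfies $k'>k$, proved by contradiction (if $k'=k$ then $w_{k'-1}=1^{k'-1}$, forcing $n\le k'+1<n$); conditioning on $\mathcal{S}_{n,k}$, which depends only on $T_1,\dots,T_k$, then gives $\mathbb{P}(X_n=1\mid \mathcal{S}_{n,k})=p$. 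You instead condition on the index $L$ of the block containing position $n$, together with the first occurrence $M$ of a $2$ in $\T$, and your key lemma is that on $\{M\le n-2\}$ one has $L>M$ and $\{L=\ell\}\cap\{M\le n-2\}$ is $\sigma(T_1,\dots,T_{\ell-1})$-measurable, hence independent of $T_\ell=X_n$. Both arguments rest on the same phenomenon --- a letter $T_i$ influences its own block length only while $T_1=\dots=T_{i-1}=1$ --- and your measurability claim, which you rightly flag as the delicate point, is valid for the reason you give: for $\ell\le n-1$ the event equals $\{L=\ell\}\cap\{M<\ell\}$, on which $X_1,\dots,X_\ell$, and hence the partial sums determining $L$, are functions of $T_1,\dots,T_{\ell-1}$ only; in a final write-up this verification should be spelled out rather than sketched. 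Your route is somewhat more economical (no need for the $\mathcal{S}_{n,k}$ machinery or the $k'>k$ contradiction, only for $\mathbb{P}(M\le n-2)=1-p^{n-2}$), whereas the paper's conditioning on the $\mathcal{S}_{n,k}$ is set up so that it can be reused verbatim in Lemma~\ref{lemma:corr} for the correlation estimate that drives the almost-sure convergence in Theorem~\ref{thm::thm1}.
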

	
\begin{proof}


We will decompose the event $\{X_n=1\}$ according to the partition $\{\mathcal{S}_{n,k} : k \in \llbracket 1;n \rrbracket \}$ of $\{1,2\}^n$ introduced in the previous section, and use the observations~\eqref{eq:S1} and~\eqref{eq:S2}. 
The following two particular cases are obvious:
 $$\mathbb{P}(X_n = 1 \, | \, (T_1,...,T_n) \in \mathcal{S}_{n,n}) = p, \quad \mbox{ and } \quad \mathbb{P}(X_n = 1 \, | \, (T_1,...,T_n) \in \mathcal{S}_{n,n-1}) = 0.$$
		
Let us now consider some $(T_1,...,T_n) \in \mathcal{S}_{n,k}$,  with $k<n-1$. Let $k'$ be the unique integer such that $|w_{k'-1}| < n$ and $|w_{k'}| \geq n$. Concretely, the $n$\textsuperscript{th} letter $X_n$ is written during the $k'$\textsuperscript{th} step, and it is equal to $T_{k'}$. By definition of $\mathcal{S}_{n,k}$, we have $k' \geq k$. Let us show that $k'\neq k$. We reason by contradiction and assume that $k'=k$. Let us now look at what we know at the end of the $k-1$\textsuperscript{th} step.
\begin{itemize}
\item By definition of $k'$, $X_n$ will be written in the next block to the right of $w_{k'-1}$.
\item By definition of $k$, we do not know the length of this block yet.
\end{itemize}
Consequently, we do not know the length of any (empty) block to the right of $w_{k'-1}$. It directly implies that $w_{k'-1}$ is made of 1's (and thus, that $w_{k'-1} = 1^{k'-1}$). It follows that $n \leq |w_{k'}| \leq |w_{k'-1}| + 2 = k'+1$. Since $k'=k < n-1$, we get a contradiction, which means that $k<k'$.
		
The fact that $(T_1,...,T_n)$ belongs to $\mathcal{S}_{n,k}$ only depends on the beginning $(T_1,...,T_k)$ of the sequence. Recall that $\T$ has a product distribution $(p\delta_1+(1-p)\delta_2)^{\otimes {\N^\star}}.$ Since $k'>k$, and $X_n=T_{k'}$, we deduce that
$$\mathbb{P}(X_n = 1 \, | \, (T_1,...,T_n) \in \mathcal{S}_{n,k}) = p.$$

Finally, we use the formula of total probability:
\begin{align*}
\mathbb{P}(X_n = 1)  & = \sum_{k=1}^{n-2} \; \mathbb{P}((T_1,...,T_n) \in \mathcal{S}_{n,k})\times \mathbb{P}(X_n = 1 \, | \, (T_1,...,T_n) \in \mathcal{S}_{n,k}) \\
& \qquad + \mathbb{P}((T_1,...T_{n-1}) = (1,...,1))\times \mathbb{P}(X_n = 1\, | \, (T_1,...,T_{n-1}) = (1,...,1)) \\
& \qquad + \mathbb{P}((T_1,...T_{n-1}) = (1,...,2))\times \mathbb{P}(X_n = 1 \, | \, (T_1,...,T_{n-1}) = (1,...,2)) \\
& = \big(1- (p^n + 2(1-p)p^{n-1} + p^{n-2} (1-p)^2)\big) \times p + p^n \times 1 + 0 \\
& = p(1-p^{n-2}+p^{n-1}).
\end{align*}   
\end{proof}

As a corollary, we obtain the following convergence of the proportion of $1$'s in $\X$.
	
\begin{corollary} If $\T\sim (p\delta_1+(1-p)\delta_2)^{\otimes {\N^\star}}$ with $p\in]0,1[$, then
\begin{equation*}
\lim_{n \to \infty} \dfrac{\mathbb{E}(|X_1\dots X_n|_1)}{n} = p
\end{equation*}
\end{corollary}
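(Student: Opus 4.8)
The plan is to deduce the corollary directly from Proposition~\ref{prop:proba} by writing the expected number of $1$'s among the first $n$ letters as a sum of the probabilities $\mathbb{P}(X_i = 1)$ and applying a Cesàro-type argument. First I would observe that, by linearity of expectation,
\[
\mathbb{E}(|X_1\dots X_n|_1) = \sum_{i=1}^{n} \mathbb{P}(X_i = 1),
\]
since $|X_1\dots X_n|_1 = \sum_{i=1}^n \mathbf{1}_{\{X_i = 1\}}$ (note $X_1$ is treated separately or one checks directly that $\mathbb{P}(X_1=1)=p$, which is immediate from the construction). Then I would plug in the closed form from the proposition: for $i \geq 2$, $\mathbb{P}(X_i = 1) = p(1 - p^{i-2} + p^{i-1}) = p - p^{i-1} + p^{i}$.

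Next I would sum this expression. The point is that $\sum_{i\geq 2}(p^{i-1} - p^i)$ is a convergent (indeed telescoping) series, so $\sum_{i=1}^n \mathbb{P}(X_i=1) = np + O(1)$; more precisely one gets $\sum_{i=2}^{n} \mathbb{P}(X_i=1) = (n-1)p - (p - p^{n}) $, so that $\mathbb{E}(|X_1\dots X_n|_1) = np - p + p - p^n + \mathbb{P}(X_1=1) - p$ up to the bookkeeping of the first term; the exact constant is irrelevant. Dividing by $n$ and letting $n \to \infty$, the bounded correction term vanishes and the limit is $p$.

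The main obstacle here is essentially nonexistent — this is a routine corollary — but the one genuine point to be careful about is the $i=1$ term, since the proposition is only stated for $n \geq 2$; one should note explicitly that $\mathbb{P}(X_1 = 1) = p$ (the first block is filled with $T_1$, which equals $1$ with probability $p$), so that the formula $\mathbb{P}(X_i=1) \to p$ holds for all $i$ and the tail $\sum_i |p^{i-1} - p^i|$ is summable because $p \in \,]0,1[$. With that in hand the Cesàro limit is immediate.
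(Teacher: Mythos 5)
Your proof is correct and follows essentially the same route as the paper: both deduce the result from Proposition~\ref{prop:proba} by an averaging (Ces\`aro) argument, the only cosmetic difference being that the paper passes through the identity $|X_1\dots X_n|_1 = 2n - (X_1+\dots+X_n)$ and applies Ces\`aro to $\E(X_n)\to 2-p$, while you sum the indicators $\mathbf{1}_{\{X_i=1\}}$ and the explicit formula directly. Your attention to the $i=1$ term (checking $\P(X_1=1)=p$) is a fine touch but not essential, since $\P(X_i=1)\to p$ alone suffices for the Ces\`aro limit.
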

	
\begin{proof}
Since $\X$ has values in $\{1,2\}$, we have $|X_1\dots X_n|_1=2n-(X_1 + ... + X_n ).$ It follows that
$$\dfrac{\mathbb{E}(|X_1\dots X_n|_1)}{n}=2-\dfrac{\sum_{k=1}^n \E(X_k)}{n}.$$
By Proposition~\ref{prop:proba}, we have $\lim_{n\to\infty} \P(X_n=1)=p$ and $\lim_{n\to\infty} \P(X_n=2)=1-p$. We deduce that $\lim_{n\to\infty} \E(X_n)=2-p$.
By Ces\`aro lemma, we obtain:
$$\lim_{n\to\infty} \dfrac{\mathbb{E}(|X_1\dots X_n|_1)}{n}=2-(2-p)=p.$$
\end{proof}

Each time we run a simulation with $\T\sim (p\delta_1+(1-p)\delta_2)^{\otimes {\N^\star}}$, the frequency of $1$'s in $\X$ seems to converge to $p$. We thus expect the sequence ${|X_1\dots X_n|_1}/{n}$ to converge almost surely to $p$, and not only in expectation. Since the random variables $(X_n)_{n\in\N^\star}$ are correlated, we can not directly apply the strong law of large numbers (SLLN) to prove the almost sure convergence of $(X_1 + ... + X_n)/n$. However, the correlations being sufficiently weak, we can apply the following stronger version of the SLLN.
	
\begin{theorem}[Lyons \cite{Lyons88}]\label{thm:Lyons}
Let $(Y_n)_{n \in \mathbb{N^\star}}$ be a sequence of real-valued random variables such that for all $n \in \mathbb{N^\star},$ $|Y_n| \leq 1$ and
$$\forall n,m \in \mathbb{N^\star}, \ \mathbb{E}(Y_mY_n) \leq \Phi (|n-m|), \quad \mbox{ with } \Phi \geq 0 \mbox{ and } \sum_{n\geq 1} \frac{\Phi(n)}{n} < \infty.$$
Then $\displaystyle \lim_{n\to\infty} \frac{1}{n}\sum_{k=1}^{n} Y_k =0$ almost surely.
\end{theorem}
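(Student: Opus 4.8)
This statement is quoted from \cite{Lyons88}, so in the paper it is used as a black box; here is how I would prove it. Write $S_n=\sum_{k=1}^{n}Y_k$ and, for a finite interval $I\subseteq\N^\star$, $T_I=\sum_{k\in I}Y_k$; the goal is $S_n/n\to 0$ almost surely. The only way the hypothesis enters is through the second-moment bound it yields: since $\Phi\geq 0$ and $\E(Y_iY_j)\leq\Phi(|i-j|)$ for all indices (the case $i=j$ included), for any interval $I$ of length $m$
\[
\E(T_I^2)=\sum_{i,j\in I}\E(Y_iY_j)\leq m\,\Phi(0)+2\sum_{d=1}^{m-1}(m-d)\Phi(d)\leq m\,h(m),\qquad h(m):=\Phi(0)+2\sum_{d=1}^{m-1}\Phi(d).
\]
Kronecker's lemma applied to $\sum_d \Phi(d)/d<\infty$ gives $\sum_{d=1}^{m}\Phi(d)=o(m)$, hence $h(m)=o(m)$ and $\E(S_n^2)=o(n^2)$; this already proves $S_n/n\to 0$ in $L^2$, but not almost surely, because the $Y_n$ are correlated.

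The next step is almost sure convergence along the dyadic subsequence $n_j=2^j$. Set $\eta_j:=h(2^j)/2^j\geq 0$; interchanging the two summations shows that $\sum_j \eta_j<\infty$ is in fact \emph{equivalent} to the hypothesis $\sum_d \Phi(d)/d<\infty$. Since $\E(S_{2^j}^2)/2^{2j}\leq h(2^j)/2^j=\eta_j$, Chebyshev's inequality together with the Borel--Cantelli lemma yields $S_{2^j}/2^j\to 0$ almost surely.

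It then remains to control the oscillation inside the blocks, i.e.\ to show $2^{-j}\max_{2^j<n\leq 2^{j+1}}|S_n-S_{2^j}|\to 0$ almost surely. Decompose the block $(2^j,2^{j+1}]$ dyadically: at level $l$ ($0\leq l\leq j$) it splits into $2^{j-l}$ sub-blocks of length $2^l$. By the standard bisection (Rademacher--Menshov) bound, for every $n$ in the block $|S_n-S_{2^j}|\leq\sum_{l=0}^{j}W_{j,l}$, where $W_{j,l}$ is the largest absolute sub-block sum at level $l$, and $\E(W_{j,l}^2)\leq\sum_{\text{level-}l\text{ sub-blocks}}\E(T_I^2)\leq 2^{j-l}\cdot 2^l h(2^l)=2^j h(2^l)$. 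The delicate point of the whole argument is how to recombine the $j+1$ levels: a Cauchy--Schwarz step with \emph{uniform} weights loses a factor of order $(\log n)^2$ and would only prove the theorem under the stronger hypothesis $\sum_d \Phi(d)(\log d)^2/d<\infty$. Instead one uses Cauchy--Schwarz with \emph{geometric} weights $\lambda_l=2^{\alpha l}$ for a fixed $\alpha\in(0,1)$, namely $\big(\sum_l W_{j,l}\big)^2\leq\big(\sum_l \lambda_l\big)\big(\sum_l W_{j,l}^2/\lambda_l\big)$; since $\sum_{l\leq j}\lambda_l\leq C_\alpha\, 2^{\alpha j}$ and $2^{-\alpha l}h(2^l)=2^{(1-\alpha)l}\eta_l$, this gives
\[
\frac{1}{2^{2j}}\,\E\Big(\max_{2^j<n\leq 2^{j+1}}|S_n-S_{2^j}|^2\Big)\ \leq\ C_\alpha\,2^{-(1-\alpha)j}\sum_{l=0}^{j}2^{(1-\alpha)l}\eta_l .
\]
Summing over $j$ and swapping the order of summation bounds the total by $\frac{C_\alpha}{1-2^{-(1-\alpha)}}\sum_l \eta_l<\infty$. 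A last application of Chebyshev and Borel--Cantelli gives $2^{-j}\max_{2^j<n\leq 2^{j+1}}|S_n-S_{2^j}|\to 0$ almost surely; combining this with the subsequence convergence and the fact that $n$ and $2^j$ are comparable on each block, one concludes $S_n/n\to 0$ almost surely.

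The routine parts are the variance estimate, the subsequence step, and the reduction from $(2^j)$ to all $n$; the genuine difficulty is the maximal inequality of the last step, and precisely the choice of weights making the bound survive under the sharp hypothesis $\sum_n\Phi(n)/n<\infty$ rather than the easier $\sum_n\Phi(n)(\log n)^2/n<\infty$. This is exactly what is carried out in \cite{Lyons88}; alternatively one may simply quote a weighted form of the Rademacher--Menshov maximal inequality and feed it the variance estimate of the first step.
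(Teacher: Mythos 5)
The paper itself does not prove this statement: it is imported verbatim from Lyons \cite{Lyons88} and used as a black box, so there is no internal proof to compare you against; what can be judged is your argument on its own merits, and it checks out. The one-sided hypothesis $\E(Y_iY_j)\leq\Phi(|i-j|)$ is indeed enough, because you only ever use upper bounds on the nonnegative quantities $\E(T_I^2)$; the equivalence of $\sum_j h(2^j)/2^j<\infty$ with $\sum_d\Phi(d)/d<\infty$ is correct; the decomposition of $(2^j,n]$ into at most one aligned dyadic sub-block per level justifies $|S_n-S_{2^j}|\leq\sum_{l\leq j}W_{j,l}$; and the geometric weights $\lambda_l=2^{\alpha l}$ do make the double sum collapse to $C_\alpha'\sum_l\eta_l<\infty$, after which Chebyshev and Borel--Cantelli finish the proof. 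Two small remarks: for the diagonal terms it is cleaner to use $\E(Y_i^2)\leq 1$ than $\Phi(0)$, since the hypothesis constrains $\Phi$ only through the series over $n\geq 1$; and your aside that uniform weights would force the hypothesis $\sum_d\Phi(d)(\log d)^2/d<\infty$ overstates the loss --- in this Ces\`aro setting the uniform-weight bookkeeping costs a single logarithm ($\sum_d\Phi(d)\log d/d<\infty$); the squared logarithm is the classical Rademacher--Menshov condition for orthogonal series. Neither point affects your proof.

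Your route is, however, heavier than what the statement requires, and different in spirit from Lyons' own argument, which exploits the boundedness $|Y_n|\leq 1$ instead of a maximal inequality: the averages $Z_N=S_N/N$ vary slowly, $|Z_{N+1}-Z_N|\leq 2/(N+1)$, so one can either run Chebyshev--Borel--Cantelli along $n_j=\lceil\rho^j\rceil$ (the same computation as your dyadic one gives $\sum_j h(n_j)/n_j<\infty$ for every $\rho>1$), bound the in-block fluctuation deterministically by $|S_n-S_{n_j}|\leq n-n_j$, and let $\rho\downarrow 1$; or observe that $\sum_N \E(Z_N^2)/N\leq\sum_N h(N)/N^2<\infty$, hence $\sum_N Z_N^2/N<\infty$ almost surely, which together with the slow variation forces $Z_N\to 0$. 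What your weighted Rademacher--Menshov machinery buys in exchange is generality: it never uses $|Y_n|\leq 1$ beyond a uniform bound on second moments, so it proves the conclusion under $\sup_n\E(Y_n^2)<\infty$, where the elementary in-block argument breaks down.
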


In order to apply Theorem~\ref{thm:Lyons}, let us first prove the following lemma.

\begin{lemma}\label{lemma:corr} If $\T\sim (p\delta_1+(1-p)\delta_2)^{\otimes {\N^\star}}$ with $p\in]0,1[$, then for any $m \geq 1$ and any $n \geq m+2$, 
$$\mathbb{P}(X_m = 2 \mbox{ and } X_{n} = 1) = p \times \mathbb{P}(X_m = 2).$$
\end{lemma}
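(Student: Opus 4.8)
The plan is to follow the scheme of the proof of Proposition~\ref{prop:proba}: decompose the event $\{X_m=2 \text{ and } X_n=1\}$ along the partition $\{\mathcal{S}_{n,k}:k\in\llbracket 1;n\rrbracket\}$ of $\{1,2\}^n$, and show that on every cell the conditional probability of $\{X_m=2,\ X_n=1\}$ is $p$ times the conditional probability of $\{X_m=2\}$. Before doing so, I record the elementary description of the letters of $\X$: writing $b_i$ for the index of the block of $\mathcal{O}_{\T}$ containing its $i$\textsuperscript{th} letter, one has $X_i=T_{b_i}$, since all letters of the $j$\textsuperscript{th} block of $\mathcal{O}_{\T}$ equal $T_j$. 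Moreover every block of $\mathcal{O}_{\T}$ has length $1$ or $2$, because its length is one of the letters already written (equivalently, the alphabet is $\{1,2\}$); hence, as soon as $n\geq m+2$, the $m$\textsuperscript{th} and $n$\textsuperscript{th} letters cannot lie in a common block and $b_m<b_n$. This strict inequality is the only place the hypothesis $n\geq m+2$ enters, and it is exactly what excludes the bad situation $b_m=b_n$, in which $X_m=X_n$ and the claimed identity would fail.

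Consider first the two boundary cells. From the identities $\mathcal{S}_{n,n}=\{(1,\dots,1),(1,\dots,1,2)\}$ and $\mathcal{S}_{n,n-1}=\{(1,\dots,1,2,1),(1,\dots,1,2,2)\}$ established in Section~\ref{sec:section2}, on the event $(T_1,\dots,T_n)\in\mathcal{S}_{n,n}\cup\mathcal{S}_{n,n-1}$ the prefix $T_1\cdots T_{n-2}$ is made of $1$'s, so $w_{n-2}=1^{n-2}$ has length $n-2\geq m$, the $m$\textsuperscript{th} letter lies in this initial run of $1$'s, and $X_m=1$. Therefore both $\mathbb{P}(X_m=2 \text{ and } X_n=1,\ \mathcal{S}_{n,k})$ and $\mathbb{P}(X_m=2,\ \mathcal{S}_{n,k})$ vanish for $k\in\{n-1,n\}$, and the identity holds trivially on these cells.

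It remains to treat a cell $\mathcal{S}_{n,k}$ with $k\leq n-2$. As in the proof of Proposition~\ref{prop:proba}, membership of $(T_1,\dots,T_n)$ in $\mathcal{S}_{n,k}$ depends only on $(T_1,\dots,T_k)$; on this cell $w_k$ already encodes the lengths of all blocks up to position $n$ (since $\sum_{i=1}^{|w_k|}[w_k]_i\geq n$), so $b_n$ and $b_m\leq b_n$ are deterministic functions of $(T_1,\dots,T_k)$, and $b_n$ equals the writing step, which in the notation of Proposition~\ref{prop:proba} is the integer $k'$ and satisfies $k'>k$. Now fix a value $(T_1,\dots,T_k)=\tau$ compatible with $\mathcal{S}_{n,k}$: then $b_m<b_n$ are fixed integers with $b_n\geq k+1$, so $X_n=T_{b_n}$ is a coordinate of the i.i.d.\ family $(T_j)_{j>k}$, which is independent of $\sigma(T_1,\dots,T_k)$, while $X_m=T_{b_m}$ is either $\sigma(T_1,\dots,T_k)$-measurable (if $b_m\leq k$) or a coordinate $T_{b_m}$ with $k<b_m<b_n$, hence independent of $T_{b_n}$. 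In all cases, conditionally on $(T_1,\dots,T_k)=\tau$, the variable $X_n$ is Bernoulli with $\mathbb{P}(X_n=1)=p$ and is independent of $X_m$, whence
$$\mathbb{P}(X_m=2 \text{ and } X_n=1 \mid \mathcal{S}_{n,k}) = p\,\mathbb{P}(X_m=2 \mid \mathcal{S}_{n,k}).$$
Summing over $k\in\llbracket 1;n\rrbracket$ by the law of total probability then yields $\mathbb{P}(X_m=2 \text{ and } X_n=1)=p\,\mathbb{P}(X_m=2)$.

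The manipulations with the sets $\mathcal{S}_{n,k}$ are routine and already present in the previous section. The one delicate point — and the step I expect to require the most care — is the assertion in the last case that, conditionally on $(T_1,\dots,T_k)$, the $n$\textsuperscript{th} letter is a ``fresh'' $\mathrm{Bernoulli}(p)$ variable: one must make precise that the random writing step $b_n$ is a function of $(T_1,\dots,T_k)$ alone — it does not peek at the future draws $T_{k+1},T_{k+2},\dots$ — and that $b_m<b_n$, so that the coordinate $T_{b_n}$ governing $X_n$ is not one of those determining $X_m$. Both facts are supplied by the structural information imported from Proposition~\ref{prop:proba} (in particular the inequality $k'>k$ on $\mathcal{S}_{n,k}$ for $k\leq n-2$) together with the elementary observation that blocks of $\mathcal{O}_{\T}$ have length at most $2$.
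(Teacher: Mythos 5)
Your proposal is correct and follows essentially the same route as the paper: decompose over the partition $\{\mathcal{S}_{n,k}\}$, note that the cells $k\in\{n-1,n\}$ are incompatible with $\{X_m=2\}$, and for $k\leq n-2$ use that membership in $\mathcal{S}_{n,k}$ depends only on $(T_1,\dots,T_k)$ together with the inequality $k'>k$ to get conditional independence of $X_m$ and $X_n$ (your two cases $b_m\leq k$ and $k<b_m<b_n$ mirror the paper's cases $i\leq k$ and $i>k$). Your formulation via the deterministic block indices $b_m<b_n$ given the prefix is just a slightly more explicit write-up of the same argument.
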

	
\begin{proof} Since $\mathcal{S}_{n,n} \cap (X_m = 2) = \emptyset$ and $\mathcal{S}_{n,n-1} \cap (X_m = 2) = \emptyset$, we have
$$\mathbb{P}(X_m = 2 \, \cap \, X_{n} = 1) = \sum_{k=1}^{n-2} \mathbb{P}(X_m = 2 \, \cap \, X_{n} = 1 \, \cap \, (T_1,...,T_{n}) \in \mathcal{S}_{n,k}).$$
It follows that
\begin{align*}
\mathbb{P}(X_m = 2 \, \cap \, X_{n} = 1)
& = \sum_{k=1}^{n-2} \mathbb{P}((T_1,...,T_{n}) \in \mathcal{S}_{n,k}) \times \mathbb{P}(X_{n} = 1 \ | \ (T_1,...,T_{n}) \in \mathcal{S}_{n,k})\\
& \qquad \qquad \times \mathbb{P}(X_m = 2 \ |\ X_{n} = 1 \, \cap \, (T_1,...,T_{n}) \in \mathcal{S}_{n,k}) 
\end{align*}
First, observe that for $k\leq n-2$, $\mathbb{P}(X_{n} = 1 \ | \ (T_1,...,T_{n}) \in \mathcal{S}_{n,k})=p$.
Let us now prove that $$\mathbb{P}(X_m = 2 \ |\ X_{n} = 1 \, \cap \, (T_1,...,T_{n}) \in \mathcal{S}_{n,k}) = \mathbb{P}(X_m = 2 \ |\ (T_1,...,T_{n}) \in \mathcal{S}_{n,k}).$$
It is equivalent to proving that when $X_m = 2$ and $(T_1,...,T_n) \in \mathcal{S}_{n,k}$ are not incompatible,
$$\mathbb{P}(X_{n} = 1 \ |\ X_m = 2\, \cap \, (T_1,...,T_{n}) \in \mathcal{S}_{n,k}) = \mathbb{P}(X_{n} = 1 \ |\ (T_1,...,T_{n})\in \mathcal{S}_{n,k}).$$
Let $i$ be the integer such that the letter $X_m$ is given by $T_i$. 
We can decompose the event $(T_1,...,T_{n}) \in \mathcal{S}_{n,k}$ into the two following cases.\\
\begin{enumerate} 
\item If $i>k$, then after reading $(T_1,\ldots,T_k)$, we know the size of the blocks containing $X_m$ and $X_n$ but not their content:
$$\underset{\text{$w_k$}}{\ub{$\times \times \times \times \times \times \times \times $}} \ \ub{??} \ \ub{?} \ \underset{X_m}{\ub{2}} \ \ub{??} \ \ub{??} \ \underset{ X_{n}}{\ub{?}}.$$ 
In this case, the variables giving the values of $X_m$ and $X_{n}$ are independent, thus the additional information that $X_m=2$ does not affect the probability of having $X_n=1$.\\
\item If $i\leq k$, then reading $(T_1,\ldots,T_k)$ already tells us whether $X_m=2$, but does not give us the content of the block containing $X_n$, which is drawn independently: 
$$\underset{\text{$w_k$ contains $X_m$}}{\ub{$\times \times \times \times \times \times 2 \ \times $}} \ \ub{??} \ \ub{?} \ \ub{?} \ \ub{??} \ \ub{??} \ \underset{ X_{n}}{\ub{?}}.$$
\end{enumerate}
In all cases, we have 
$$\mathbb{P}(X_{n} = 1 \ |\ X_m = 2 \, \cap \, (T_1,...,T_{n}) \in \mathcal{S}_{n,k}) = \mathbb{P}(X_{n} = 1 \ |\ (T_1,...,T_{n}) \in \mathcal{S}_{n,k}) =p.$$
We deduce that
\begin{align*}
\mathbb{P}(X_m = 2 \, \cap \, X_{n} = 1)
& = \sum_{k=1}^{n-2} \mathbb{P}(T_1,...,T_{n}) \in \mathcal{S}_{n,k}) \times p \times \mathbb{P}(X_m = 2 \ |\  (T_1,...,T_{n}) \in \mathcal{S}_{n,k}) \\
& = p \times \mathbb{P}(X_m = 2).
\end{align*} 
\end{proof}
	
We can now state the following theorem.

\begin{theorem}\label{thm::thm1}
If $\T\sim (p\delta_1+(1-p)\delta_2)^{\otimes {\N^\star}}$ with $p\in]0,1[$, then
\begin{equation*}
\lim_{n \to \infty} \dfrac{ |X_1 \dots X_n|_1 } {n} = p \text{ almost surely.}
\end{equation*}
\end{theorem}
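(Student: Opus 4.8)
The plan is to apply Lyons' strong law (Theorem~\ref{thm:Lyons}) to a suitably centered version of the sequence $(X_n)$. Set $Y_n = \frac{2-X_n - p}{C}$ for an appropriate constant $C$ chosen so that $|Y_n|\le 1$; note $2-X_n = |X_n|_1$ is the indicator that $X_n=1$, so $Y_n$ is $\mathbf{1}_{\{X_n=1\}}$ recentered. Since $\P(X_n=1)\to p$ by Proposition~\ref{prop:proba}, the mean of $Y_n$ tends to $0$ but is not exactly $0$; I would handle this by writing $Y_n = \widetilde{Y}_n + \varepsilon_n$ where $\widetilde Y_n$ is centered at its true mean and $\varepsilon_n = \P(X_n=1)-p = p(p^{n-1}-p^{n-2})\to 0$ geometrically, so the $\varepsilon_n$ contribution to the Cesàro average vanishes and it suffices to control the genuinely centered part. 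The heart of the matter is then to bound the correlations $\E(\widetilde Y_m \widetilde Y_n)$, equivalently to estimate $\P(X_m=1 \cap X_n=1) - \P(X_m=1)\P(X_n=1)$ for $|n-m|$ large.

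The key computational input is Lemma~\ref{lemma:corr}, which says that for $n\ge m+2$, $\P(X_m=2 \cap X_n=1) = p\,\P(X_m=2)$. From this I would deduce the analogous decorrelation statement for $\{X_m=1\}$: using $\P(X_n=1) = p(1-p^{n-2}+p^{n-1})$ from Proposition~\ref{prop:proba} together with $\P(X_m=1\cap X_n=1) = \P(X_n=1) - \P(X_m=2\cap X_n=1) = \P(X_n=1) - p\,\P(X_m=2) = \P(X_n=1) - p(1-\P(X_m=1))$, I can compute the covariance $\mathrm{Cov}(\mathbf 1_{\{X_m=1\}},\mathbf 1_{\{X_n=1\}})$ explicitly and check it equals $p\,\P(X_m=1) - p\,\P(X_m=1) $ plus lower-order terms --- in fact one finds the covariance is $p(1-\P(X_m=1)) - (1-p)\P(X_m=1)\cdot 0$... more carefully: $\mathrm{Cov} = \P(X_n=1)-p(1-\P(X_m=1)) - \P(X_m=1)\P(X_n=1) = \P(X_n=1)(1-\P(X_m=1)) - p(1-\P(X_m=1)) = (1-\P(X_m=1))(\P(X_n=1)-p)$, which is $O(p^{n})$ uniformly in $m$. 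Thus for fixed gap $d=n-m\ge 2$ we get $|\E(\widetilde Y_m\widetilde Y_n)| \le \Phi(d)$ with $\Phi(d)$ summable against $1/d$ --- indeed the bound is even uniform, decaying in $n$, so any constant $\Phi(d)=\Phi(0)$ plus the $d\le 1$ exceptional cases would need separate treatment since $\sum 1/d$ diverges; here I must be slightly careful and instead note that the covariance bound $(1-\P(X_m=1))|\P(X_n=1)-p| \le p^{n-1}$ decays with $\min(m,n)$, not with the gap.

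This forces a small refinement: Lyons' theorem as stated needs $\Phi$ depending on the gap $|n-m|$, so I would re-examine the correlation for moderate gaps. For $d=|n-m|\in\{0,1\}$ the trivial bound $|\E(\widetilde Y_m\widetilde Y_n)|\le 1$ suffices only if these finitely many ``bad'' gaps are allowed --- but they are not, since $\sum_{d\ge 1}\Phi(d)/d$ would then diverge from the $d=1$ term alone unless $\Phi(1)=0$. The correct reading is that the gap-$1$ correlation must also be shown small for large indices, which is exactly the regime Lemma~\ref{lemma:corr} excludes ($n\ge m+2$). So the genuine obstacle, and the step I expect to require the most care, is establishing a decorrelation bound for the gap-one case $X_m, X_{m+1}$; I would do this by a direct analysis analogous to Lemma~\ref{lemma:corr}, conditioning on the partition $\mathcal S_{n,k}$ and checking that when $m$ is large the events $\{X_m=2\}$ and $\{X_{m+1}=1\}$ are still asymptotically independent (they fail to be exactly independent only on the low-probability event that $X_m, X_{m+1}$ lie in the same block or in adjacent blocks determined by the same $T_i$), giving a bound $o(1)$ as $m\to\infty$. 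Once all correlations are bounded by a single summable-against-$1/n$ function $\Phi$ (taking $\Phi$ constant equal to a small bound valid for all gaps $\ge 1$ works because in fact the covariance is $O(p^{\min(m,n)})$, hence $\E(\widetilde Y_m \widetilde Y_n)\le \Phi(|n-m|)$ holds with, e.g., $\Phi(d) = c\,\rho^{d}$ for suitable $\rho<1$ after absorbing the index-decay into the gap-decay via $\min(m,n)\ge |n-m|$ when... ), Theorem~\ref{thm:Lyons} yields $\frac1n\sum_{k=1}^n \widetilde Y_k \to 0$ a.s., and combining with $\varepsilon_n\to 0$ and $|X_1\dots X_n|_1 = \sum_{k=1}^n \mathbf 1_{\{X_k=1\}}$ gives $|X_1\dots X_n|_1/n \to p$ almost surely.
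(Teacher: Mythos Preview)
Your overall strategy is exactly the paper's: center the sequence and apply Lyons' theorem, using Lemma~\ref{lemma:corr} together with Proposition~\ref{prop:proba} to control the second moments. Your covariance computation is also correct: for $n\ge m+2$,
\[
\mathrm{Cov}\big(\mathbf 1_{\{X_m=1\}},\mathbf 1_{\{X_n=1\}}\big)=\P(X_m=2)\big(\P(X_n=1)-p\big)=-(1-p)\,\P(X_m=2)\,p^{n-1}.
\]
But from this point on you talk yourself into difficulties that do not exist, and the argument never closes.

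The first oversight is that you treat the Lyons condition as a bound on $|\E(Y_mY_n)|$. It is not: Theorem~\ref{thm:Lyons} only asks for the \emph{one-sided} inequality $\E(Y_mY_n)\le\Phi(|n-m|)$. Since your covariance above is \emph{negative} for every gap $|n-m|\ge 2$, you may simply take $\Phi(k)=0$ for all $k\ge 2$. There is nothing to bound in absolute value and no need to turn the $O(p^{n})$ decay into a gap-dependent decay. This is precisely what the paper does (working with $\tilde X_n=X_n-(2-p)$ rather than the exactly-centered indicator, which makes the bookkeeping slightly cleaner and avoids your $\varepsilon_n$ splitting, but the content is the same: $\E(\tilde X_m\tilde X_n)=-(1-p)^2p^{n-1}\le 0$).

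The second error is a plain slip: you write that with $\Phi(1)=1$ the series $\sum_{d\ge 1}\Phi(d)/d$ ``would diverge from the $d=1$ term alone''. But the $d=1$ term contributes $\Phi(1)/1=1$, a single finite number. Taking $\Phi(0)=\Phi(1)=1$ and $\Phi(k)=0$ for $k\ge 2$ gives $\sum_{d\ge 1}\Phi(d)/d=1<\infty$, which is exactly the choice the paper makes. Once you fix these two points, the entire detour through the gap-one case, the attempt to manufacture a $\rho^d$ bound, and the unfinished inequality ``$\min(m,n)\ge |n-m|$ when\ldots'' all become unnecessary, and Lyons' theorem applies immediately.
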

	
\begin{proof} In order to apply Theorem \ref{thm:Lyons}, we need to center the random variables $(X_n)_{n\in\N^\star}$. For $n\in\N^\star$, we thus introduce the random variables $\tilde{X}_n = X_n-(2-p),$ in order to have $|\tilde X_n|\leq 1$ and $\lim_{n\to\infty} \E(\tilde X_n)=0$. Now, let us exploit Lemma~\ref{lemma:corr} to compute $\mathbb{E}(\tilde{X}_m\tilde{X}_n)$, for $n\geq m+2$. We have
\begin{align*}
& \mathbb{P}(\tilde{X}_m = p \, \cap \, \tilde{X}_n = p-1) = p \times \mathbb{P}(X_m = 2),\\
& \mathbb{P}(\tilde{X}_m = p \, \cap \, \tilde{X}_n = p) = (1-p) \times \mathbb{P}(X_m = 2),\\
& \mathbb{P}(\tilde{X}_m = p-1 \, \cap \, \tilde{X}_n = p-1) = 1-\P(X_n=2) -p \,\mathbb{P}(X_m = 2),\\
& \mathbb{P}(\tilde{X}_m = p-1 \, \cap \, \tilde{X}_n = p) = \P(X_n=2)-(1-p)\,\P(X_m=2).
\end{align*}
Gathering these values and using Proposition~\ref{prop:proba}, we obtain
$$\E(\tilde{X}_n\tilde{X}_m)=-(1-p)^2p^{n-1}\leq 0.$$
Let us define a function $\Phi:\N^\star\to{\mathbb R}$ by $\Phi (0) = \Phi (1) = 1$ and for all $k \geq 2, \Phi (k) = 0$. Then $\mathbb{E}(\tilde{X}_m\tilde{X}_n)\leq \Phi(|n-m|)$ for all $m,n\in\N^\star$, and $\Phi$ satisfies obviously
$\Phi \geq 0$ and $\sum_{n\geq 1} \frac{\Phi (n)}{n} < \infty$.
By Theorem~\ref{thm:Lyons}, we deduce that
$$\lim_{n\to\infty}\frac{1}{n}\sum_{k=1}^{n} \tilde{X}_k = 0 \mbox{ almost surely}.$$
Consequently, $\displaystyle \lim_{n\to\infty}\frac{1}{n}\sum_{k=1}^{n} X_k= 2-p \mbox{ a.s.},$ and
$$\lim_{n \to \infty} \dfrac{ |X_1 \dots X_n|_1 } {n} = p \text{ almost surely.}$$
\end{proof}

To conclude on the case of a directing sequence following a product distribution, let us mention that the previous results can be extended to other alphabets. In particular, Proposition~\ref{prop:proba} is extended as follows.

\begin{proposition}
Let $a, b \in \bbN^{\star}$ with $1<a<b$, and let $p \in ]0,1[$.
\begin{enumerate}
\item If $\mathcal{A} = \{1,a\}$ and $\T\sim (p\delta_1+(1-p)\delta_a)^{\otimes {\N^\star}}$, then
\begin{equation*}
\forall n \geq a,\quad \mathbb{P}(X_n = 1) = p \left(1 - p^{n-a} + p^{n-1}\right)
\end{equation*}
\item If  $\mathcal{A} = \{a,b\}$ and $\T\sim (p\delta_a+(1-p)\delta_b)^{\otimes {\N^\star}}$, then
\begin{equation*}
\forall n \geq b+1, \quad \mathbb{P}(X_n = a) = p.
\end{equation*}
\end{enumerate}
\end{proposition}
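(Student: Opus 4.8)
The plan is to replay the proof of Proposition~\ref{prop:proba} along the same lines, simply replacing the alphabet $\{1,2\}$ by $\{1,a\}$ for part~(1), and by $\{a,b\}$ for part~(2), in the definition~\eqref{dfn::Snk} of the sets $\mathcal{S}_{n,k}$. As before, $\{\mathcal{S}_{n,k}\}_{1\le k\le n}$ is a partition of the sample space, since at each step at least one letter of $\mathcal{O}_T$ is written and every letter is $\ge 1$. For a fixed directing sequence I write $k'$ for the step at which position $n$ gets filled, so that $X_n=T_{k'}$ and $k'\ge k$ whenever $(t_1,\dots,t_n)\in\mathcal{S}_{n,k}$. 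On any cell $\mathcal{S}_{n,k}$ on which $k'=k$ cannot occur, membership in $\mathcal{S}_{n,k}$ depends only on $T_1,\dots,T_k$ whereas $X_n=T_{k'}$ with $k'>k$, so exactly as in Proposition~\ref{prop:proba} one gets $\mathbb{P}(X_n=1\mid (T_1,\dots,T_n)\in\mathcal{S}_{n,k})=p$ (resp. $\mathbb{P}(X_n=a\mid\cdots)=p$ in the setting of part~(2)). The whole point is therefore to locate the cells on which $k'=k$ is possible.

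For part~(2) this is immediate. Since the smallest letter is $a\ge 2$, every block of $\mathcal{O}_T$ has length $\ge a$, so $|w_{k'-1}|\ge a(k'-1)$. As in Proposition~\ref{prop:proba}, $k'=k$ forces the length of the $(k')$-th block to be still unknown after step $k'-1$, hence $k'>|w_{k'-1}|\ge a(k'-1)$, which yields $k'<a/(a-1)\le 2$, i.e. $k'=1$; but $k'=1$ means that position $n$ lies in the first block, whose length is $t_1\le b$, which is impossible once $n\ge b+1$. Hence, for $n\ge b+1$, every cell is generic, and $\mathbb{P}(X_n=a)=\sum_{k=1}^n\mathbb{P}\big((T_1,\dots,T_n)\in\mathcal{S}_{n,k}\big)\cdot p=p$.

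For part~(1) the letter $1$ keeps its usual role, and the key step is the lemma: \emph{if $(t_1,\dots,t_n)\in\mathcal{S}_{n,k}$ with $k>n-a$, then $t_1=\cdots=t_{k-1}=1$, and moreover $t_k=a$ when $k<n$.} This rests on a block-length ``explosion'' estimate: if $\ell$ denotes the first index $\le k-1$ with $t_\ell=a$, then $w_\ell=1^{\ell-1}a^{a}$, and — distinguishing the case $\ell<n-a$, in which position $\ell+1$ already falls inside the $a$-run of block $\ell$ (here $a\ge 2$ is used), from the borderline case $\ell=n-a$, in which one invokes $a^{2}\ge a+1$ — one checks that already $\sum_{i=1}^{|w_{n-a}|}[w_{n-a}]_i\ge n$, contradicting $k>n-a$. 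Two consequences follow. First, $\bigcup_{k>n-a}\mathcal{S}_{n,k}=\{T_1=\cdots=T_{n-a}=1\}$, an event of probability $p^{n-a}$, so $\sum_{k=1}^{n-a}\mathbb{P}(\mathcal{S}_{n,k})=1-p^{n-a}$. Second, for $k>n-a$ one gets $\mathcal{S}_{n,k}=\{T_1=\cdots=T_{k-1}=1,\ T_k=a\}$ when $n-a<k<n$, of probability $p^{k-1}(1-p)$, on which $X_n=a$ because position $n$ lies in the run $a^{a}$ of block $k$ (as $k\le n\le k-1+a$), while $\mathcal{S}_{n,n}=\{T_1=\cdots=T_{n-1}=1\}$, of probability $p^{n-1}$, on which $X_n=T_n$ and hence $\mathbb{P}(X_n=1\mid\mathcal{S}_{n,n})=p$. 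On the cells with $k\le n-a$ one has $k'>k$, by the same argument as in Proposition~\ref{prop:proba}: $k'=k$ would force $w_{k-1}=1^{k-1}$, hence $|w_k|=k-1+t_k\le k-1+a$, whence $n\le k-1+a$, i.e. $k>n-a$, a contradiction. Therefore $\mathbb{P}(X_n=1\mid\mathcal{S}_{n,k})=p$ for all $k\le n-a$, and the formula of total probability gives $\mathbb{P}(X_n=1)=p\,(1-p^{n-a})+p^{n-1}\cdot p=p\,(1-p^{n-a}+p^{n-1})$.

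The step I expect to cost the most care is this ``explosion'' estimate: one must make sure that a single early occurrence of the large letter $a$ in the directing sequence already reveals, by step $n-a$, the block of $\mathcal{O}_T$ containing position $n$; the estimate is tight precisely in the borderline case $\ell=n-a$, where it amounts to $a^{2}\ge a+1$, that is, to the hypothesis $a\ge 2$. Everything else — the partition property, the measurability-and-independence argument giving $\mathbb{P}(X_n=\cdot\mid\mathcal{S}_{n,k})=p$ on generic cells, and the final summation — is routine and mirrors the proofs of Proposition~\ref{prop:proba} and Lemma~\ref{lemma:corr}. As a sanity check, for $a=2$ part~(1) is exactly Proposition~\ref{prop:proba}.
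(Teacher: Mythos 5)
Your proof is correct and takes essentially the same route as the paper: the same partition $\{\mathcal{S}_{n,k}\}$, the same identification of the exceptional cells $k>n-a$ (whose explicit structure the paper merely asserts), the same conditional probabilities $0$ and $p$, and the same total-probability computation; likewise for part (2), where both arguments amount to showing that over $\{a,b\}$ with $a\ge 2$ the block containing position $n$ is known strictly before it is filled once $n\ge b+1$. One minor remark: in your ``explosion'' lemma you only treat a first occurrence of $a$ at an index $\ell\le n-a$; if instead $n-a<\ell\le k-1$, the stated inequality at step $n-a$ does not hold, but the same estimate $\ell-1+a^2\ge n-a+a^2\ge n$ shows the covering criterion is already met at step $\ell<k$, contradicting the minimality of $k$, so that subcase is vacuous and the lemma stands.
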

	
\begin{proof}
\begin{enumerate}
\item We use the same partition as in the proof of Proposition \ref{prop:proba}, but we now distinguish the sets $\mathcal{S}_{n,k}$ for $n-a+1\leq k\leq n$. We have $\mathcal{S}_{n,n} = \{(t_1,\dots,t_n) \in \{1,a\}^n :(t_1,\dots,t_{n-1}) = (1,\dots,1)\},$ and for $n-a+1\leq k\leq n-1$,
$$\mathcal{S}_{n,k} = \{(t_1,\dots,t_n) \in \{1,a\}^n : (t_1,\dots,t_k) = (1,\dots,1,a)\}.$$
If $n-a+1\leq k\leq n-1$, then for the same reason as before, $\mathbb{P}(X_n = 1 \, | \, (T_1,\dots,T_n) \in \mathcal{S}_{n,k}) = 0$. In all the other cases, $\mathbb{P}(X_n = 1 \, | \, (T_1,\dots,T_n) \in \mathcal{S}_{n,k}) = p$. Thus, 
$$\mathbb{P}(X_n = 1) = p(1-p^{n-2}(1-p)-\dots-p^{n-a}(1-p)) = p \left(1 - p^{n-a} + p^{n-1}\right).$$
\item Since $a>1$, we have $\sum_{j=1}^{|w_k|}[w_k]_j - |w_k| \geq (a-1)|w_k| > 0$ for all $k \in \mathbb{N}^\star$. This means that we always know the length of at least one empty block after the $k$\textsuperscript{th} step. Thus, except if $n\leq b$ (in which case the $n$\textsuperscript{th} letter might be written during the first step), we are sure that we will know the length of the block containing the $n$\textsuperscript{th} letter strictly before filling it. As the $T_j$ are independent, we deduce that $\mathbb{P}(X_n = a) = p$.
\end{enumerate}
\end{proof}

	\section{Sequence directed by a Markov chain\label{sec::markov}}

In order to get closer to the deterministic case where a 1 always follows a 2 and vice versa, we are now interested in the case of directing sequences which are given by Markov chains.

In the present section, we assume that the directing sequence $\mathbb{T} = (T_n)_{n \in \bbN^\star}$ is a Markov chain over the alphabet $\{1,2\}$ with initial value $T_1=1$ and whose transition probability from 1 to 2 (and from 2 to 1) is $p \in ]0,1[$. A large value of $p$ encourages the alternation between 1 and 2. The original case of the Oldenburger-Kolakoski sequence can be viewed as a << limit >> case of a Markov chain whose transition probability from 1 to 2 (and from 2 to 1) would be equal to 1.

\begin{theorem}\label{thm:markov}
	Let $ p \in ]0,1[$ and let $\mathbb{T}$ be a Markov chain over the alphabet  $\{1,2\}$ with initial value $T_1=1$ and whose transition probability from 1 to 2 (and from 2 to 1) is $p \in ]0,1[$. Then 
	$$\displaystyle \lim_{n\to \infty}{\mathbb{P}(X_n = 1)} = \dfrac{1}{2}.$$
\end{theorem}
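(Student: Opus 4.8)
The plan is to reuse the partition $\{\mathcal S_{n,k}\}$ and the combinatorial analysis of the proof of Proposition~\ref{prop:proba}, but to keep track of how far the block carrying the $n$\textsuperscript{th} letter sits beyond the step at which that block is first recognised. Let $P=\bigl(\begin{smallmatrix}1-p&p\\ p&1-p\end{smallmatrix}\bigr)$ be the transition matrix of $\T$; a direct diagonalisation gives $(P^{m})_{a,1}=\tfrac12+\tfrac12(1-2p)^{m}s(a)$ for all $m\ge0$, where $s(1)=1$ and $s(2)=-1$. For $m\ge1$, write $\beta(m)=\min\{j\ge1:X_1+\dots+X_j\ge m\}$ for the index of the block of $\O_\T$ containing its $m$\textsuperscript{th} letter, so that $X_m=T_{\beta(m)}$. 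As established in the proof of Proposition~\ref{prop:proba}, if $(T_1,\dots,T_n)\in\mathcal S_{n,k}$ with $k\le n-2$, then the block containing the $n$\textsuperscript{th} letter has index $k'=\beta(n)$, this index is a deterministic function of $(T_1,\dots,T_k)$, one has $k'>k$, and $X_n=T_{k'}$. Conditioning on $(T_1,\dots,T_k)$ and using the Markov property --- so that, given $(T_1,\dots,T_k)$, the variable $T_{k'}$ depends only on $T_k$ and on the (now deterministic) integer $k'-k$ --- we get, on $\mathcal S_{n,k}$,
\[
\P\bigl(X_n=1\mid T_1,\dots,T_k\bigr)=(P^{k'-k})_{T_k,1}=\tfrac12+\tfrac12(1-2p)^{k'-k}s(T_k).
\]
The sets $\mathcal S_{n,n-1}$ and $\mathcal S_{n,n}$ are treated exactly as in the proof of Proposition~\ref{prop:proba}, and their contributions are $O((1-p)^{n})$. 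Summing over $k\le n-2$ and over $T_k\in\{1,2\}$ (and noting that, on $\mathcal S_{n,k}$, the recognition index $k$ equals $\beta(\beta(n))$ while $k'=\beta(n)$), we arrive at
\[
\P(X_n=1)=\tfrac12+\tfrac12\,\E\bigl[\mathbf 1_{\{k'>k\}}(1-2p)^{k'-k}s(T_k)\bigr]+O\bigl((1-p)^{n}\bigr),
\]
so that everything reduces to showing that the expectation above tends to $0$.

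The crux is the following estimate: \emph{for every fixed $R\in\bbN^\star$, one has $\P(k'-k<R)\to0$ as $n\to\infty$}. Since $X_i\le2$ for all $i$, we have the deterministic bound $\beta(m)\ge m/2$, whence $k'=\beta(n)\ge\lceil n/2\rceil$. Now $k'-k<R$ means $k=\beta(k')>k'-R$, which is equivalent to $X_1+\dots+X_{k'-R}<k'$, that is, to the fact that fewer than $R$ of $X_1,\dots,X_{k'-R}$ are equal to $2$. Since $k'-R\ge\lceil n/2\rceil-R$, this forces fewer than $R$ of $X_1,\dots,X_{\lceil n/2\rceil-R}$ to be equal to $2$, so that
\[
\P(k'-k<R)\le\P\bigl(\#\{i\le\lceil n/2\rceil-R:X_i=2\}<R\bigr).
\]
It remains to see that the number of $2$'s among the first $M$ letters of $\O_\T$ goes to $+\infty$ (almost surely, hence in probability) as $M\to\infty$. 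But $\O_\T=\prod_{j\ge1}T_j^{X_j}$, and its first $\lfloor M/2\rfloor$ blocks lie entirely within its first $M$ letters (each block has length at most $2$); moreover block $j$ contributes a letter $2$ as soon as $T_j=2$. As $\T$ is an irreducible Markov chain over $\{1,2\}$ with $p\in\,]0,1[$, it visits the state $2$ infinitely often almost surely; hence $\O_\T$ contains infinitely many $2$'s almost surely, and by continuity from above $\P(\#\{i\le M:X_i=2\}<R)\to0$ as $M\to\infty$.

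It is now straightforward to conclude. Since $p\in\,]0,1[$, we have $|1-2p|<1$. Splitting the expectation according to whether $k'-k\ge R$ or not, and using $|s(T_k)|\le1$,
\[
\bigl|\E\bigl[\mathbf 1_{\{k'>k\}}(1-2p)^{k'-k}s(T_k)\bigr]\bigr|\le|1-2p|^{R}+\P(k'-k<R).
\]
Letting $n\to\infty$, the last term vanishes by the estimate above, so $\limsup_{n\to\infty}\bigl|\E[\,\cdot\,]\bigr|\le|1-2p|^{R}$; letting then $R\to\infty$ gives $\E[\,\cdot\,]\to0$, and therefore $\lim_{n\to\infty}\P(X_n=1)=\tfrac12$.

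The main obstacle is the estimate $\P(k'-k<R)\to0$, that is, the fact that the recognition gap $\beta(n)-\beta(\beta(n))$ is typically large: the pointwise bound $|(1-2p)^{k'-k}|\le|1-2p|$ (valid whenever $k'>k$) does not suffice, because $|1-2p|$ need not be small, and one really needs that $\O_\T$ contains infinitely many $2$'s almost surely --- this is precisely where the recurrence of the directing Markov chain enters. The rest is a routine adaptation of the proof of Proposition~\ref{prop:proba}.
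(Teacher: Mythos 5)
Your proof is correct and follows essentially the same route as the paper's: decompose along the partition $\{\mathcal S_{n,k}\}$, write $X_n=T_{k'}$ with $k'>k$ determined by $(T_1,\dots,T_k)$, exploit the explicit mixing $(P^m)_{a,1}=\tfrac12+\tfrac12(1-2p)^m s(a)$ of the two-state chain, and show that the gap $k'-k$ tends to infinity in probability because occurrences of $2$ accumulate, before letting the auxiliary parameter ($R$ in your write-up, $\ell$ in the paper) go to infinity. The only (harmless) differences are that you count the $2$'s in the directed sequence $X_1,\dots,X_{\lceil n/2\rceil-R}$ rather than in the directing sequence $T_1,\dots,T_{\lfloor n/2\rfloor}$, and that you state an exact identity with error terms where the paper sandwiches $\mathbb{P}(X_n=1\mid D\ge\ell)$ between $\tfrac12(1\pm|1-2p|^{\ell})$.
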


\begin{proof} Let us first note that for all integers $s>r \geq 1$, $\mathbb{P}(T_s=1 \ | \ T_r=1) = \frac{1}{2}(1+(1-2p)^{s-r})$ and $\mathbb{P}(T_s=1 \ | \ T_r=2) = \frac{1}{2}(1-(1-2p)^{s-r})$. 
	
	Let $\ell \in \mathbb{N}^\star$ and let $n \geq 2\ell$. Consider the integer $k \in \mathbb{N}^\star$ such that $(T_1,...,T_n) \in \mathcal{S}_{n,k}$.
	If we have at least $2\ell$ occurences of $2$ among $T_1,\dots,T_{\lfloor n/2 \rfloor}$, then we have at least $2\ell$ occurences of 2 among $(T_1,\dots,T_k)$, and thus at least $2\ell$ occurences of 2 in $w_k$. This implies that $n-|w_k|\geq 2\ell,$ meaning that when $w_k$ is written, we know the lengths of at least $\ell$ empty blocks between position $|w_k|$ and position $n$.
	Similarly to the proof of Proposition~\ref{prop:proba}, let us consider the integer $k'$ such that $X_n$ is given by $T_{k'}$. We also introduce $D=k-k'$, as illustrated below (by definition, $X_{|w_k|}=T_k$ and $X_n=T_{k'}$):
	$$\ub{$\times \times \times \times \times \times \times \times $} \ \underset{X_{|w_k|}}{\ub{$\times$}} \ \underbrace{\ub{??} \ \ub{?} \  \ub{??} \ \ub{??}}_{\text{$D-1$ blocks}} \ \underset{X_n}{\ub{?}}.$$
	By the above observations, we have
	\begin{align*}
	\mathbb{P}(D < \ell) &\leq \P(\{\mbox{ less that $2\ell$ occurences of $2$ among $T_1,\dots,T_{\lfloor n/2 \rfloor}$ }\}),
	\end{align*}
	and the probability on the right goes to $0$ as $n$ goes to $\infty$.
	Furthermore,		
	\begin{eqnarray*}
		\mathbb{P}(X_n=1 \ | \ D \geq \ell)  & = & \mathbb{P}(T_{k'}=1 \ | \ D \geq \ell)\\
		& = 	& \mathbb{P}(T_{k'}=1 \ | \ D \geq \ell \, \cap \, T_{k}=1)\times \mathbb{P}(T_{k} = 1 \ | \ D \geq \ell)\\
		& & \quad + \; \mathbb{P}(T_{k'}=1 \ | \ D \geq \ell \, \cap \, T_{k}=2)\times \mathbb{P}(T_{k} = 2 \ | \ D \geq \ell)\\
		& & \in \left[ \frac{1}{2}(1-|1-2p|^\ell) ; \; \frac{1}{2}(1+|1-2p|^\ell) \right] 
	\end{eqnarray*}
	thanks to the remark made at the beginning of the proof. We deduce that
	$$\underset{n \to \infty}{\limsup} \; \mathbb{P}(X_n = 1) \leq \frac{1}{2}(1+|1-2p|^\ell) \quad \mbox{ and } \quad \underset{n \to \infty}{\liminf} \; \mathbb{P}(X_n = 1) \geq \frac{1}{2}(1-|1-2p|^\ell).$$
	Then, by letting $\ell$ goes to infinity, we obtain $\lim_{n\to\infty} \P(X_n=1)=\frac{1}{2}$.
\end{proof}

As a direct consequence of Theorem~\ref{thm:markov}, we obtain the following result.

\begin{corollary}\label{cor:markov} Let $ p \in ]0,1[$ and let $\mathbb{T}$ be a Markov chain over the alphabet  $\{1,2\}$ with initial value $T_1=1$ and whose transition probability from 1 to 2 (and from 2 to 1) is $p \in ]0,1[$. Then 
	$$\lim_{n \to \infty} \dfrac{\mathbb{E}(|X_1\dots X_n|_1)}{n} = \dfrac{1}{2}$$
\end{corollary}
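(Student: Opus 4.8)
The plan is to reduce the statement to Theorem~\ref{thm:markov} by the same bookkeeping trick used for the i.i.d.\ corollary. Since $\mathbb X$ takes values in $\{1,2\}$, each letter equal to $2$ contributes $2$ to the sum $X_1+\dots+X_n$ and each letter equal to $1$ contributes $1$, so $|X_1\dots X_n|_1 = 2n-(X_1+\dots+X_n)$. Taking expectations and dividing by $n$ gives
$$\frac{\mathbb E(|X_1\dots X_n|_1)}{n} = 2 - \frac{1}{n}\sum_{k=1}^n \mathbb E(X_k).$$
So it suffices to control the Cesàro average of $\mathbb E(X_k)$.

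Next I would invoke Theorem~\ref{thm:markov}: $\lim_{n\to\infty}\mathbb P(X_n=1)=\tfrac12$, hence also $\lim_{n\to\infty}\mathbb P(X_n=2)=\tfrac12$ since $X_n\in\{1,2\}$. Writing $\mathbb E(X_n)=\mathbb P(X_n=1)+2\,\mathbb P(X_n=2)=2-\mathbb P(X_n=1)$, we get $\lim_{n\to\infty}\mathbb E(X_n)=\tfrac32$. By the Cesàro lemma, the average $\frac1n\sum_{k=1}^n \mathbb E(X_k)$ converges to the same limit $\tfrac32$, and therefore
$$\lim_{n\to\infty}\frac{\mathbb E(|X_1\dots X_n|_1)}{n} = 2-\frac32 = \frac12.$$

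There is essentially no obstacle here: all the work has already been done in Theorem~\ref{thm:markov}, and what remains is the elementary linear algebra relating $|X_1\dots X_n|_1$ to $\sum_k X_k$ together with a standard Cesàro argument, exactly parallel to the corollary following Proposition~\ref{prop:proba}. The only point worth a word of care is that we are passing from pointwise convergence of $\mathbb E(X_n)$ to convergence of its Cesàro means, which is precisely the content of the Cesàro lemma and requires no integrability or correlation hypotheses; in particular, unlike Theorem~\ref{thm::thm1}, this statement is purely about expectations and does not need the weak-correlation machinery of Lyons' theorem.
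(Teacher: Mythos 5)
Your proof is correct and follows exactly the route the paper intends: the paper states this corollary as a direct consequence of Theorem~\ref{thm:markov}, with the implicit argument being precisely your reduction $|X_1\dots X_n|_1=2n-(X_1+\dots+X_n)$ followed by Ces\`aro, as in the corollary after Proposition~\ref{prop:proba}. Nothing to add.
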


We conjecture that the convergence also holds almost surely but we have been unable to prove it so far, as the computation of the correlations is much more intricate in the markovian case.

\begin{conjecture} Let $ p \in ]0,1[$ and let $\mathbb{T}$ be a Markov chain over the alphabet  $\{1,2\}$ with initial value $T_1=1$ and whose transition probability from 1 to 2 (and from 2 to 1) is $p \in ]0,1[$. Then 
	\begin{equation*}
	\lim_{n \to \infty} \dfrac{ |X_1 \dots X_n|_1 } {n} = \dfrac{1}{2} \text{ almost surely.}
	\end{equation*}
\end{conjecture}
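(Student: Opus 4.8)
The plan is to mimic the proof of Theorem~\ref{thm::thm1}. Put $\tilde X_n := X_n - \tfrac32$, so that $|\tilde X_n| = \tfrac12 \le 1$ and, by Theorem~\ref{thm:markov}, $\lim_{n\to\infty}\E(\tilde X_n) = 0$. If one can exhibit a function $\Phi\ge 0$ with $\sum_{k\ge 1}\Phi(k)/k < \infty$ such that $\E(\tilde X_m\tilde X_n)\le\Phi(|n-m|)$ for all $m,n\in\N^\star$, then Theorem~\ref{thm:Lyons} yields $\frac1n\sum_{k=1}^n\tilde X_k\to 0$ almost surely, hence $\frac1n\sum_{k=1}^n X_k\to\tfrac32$ a.s.\ and $|X_1 \dots X_n|_1/n\to\tfrac12$ a.s. One may assume $p\neq\tfrac12$, since for $p=\tfrac12$ the variables $T_2,T_3,\dots$ are i.i.d.\ uniform and the argument of Theorem~\ref{thm::thm1} applies with only cosmetic changes. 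So everything comes down to bounding the correlations.

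The starting point is that $X_m$ and $X_n$ are \emph{particular letters of the Markov chain} $\T$: if $c_n$ denotes the index of the block of $\X$ containing its $n$\textsuperscript{th} letter (the integer called $k'$ in the proof of Proposition~\ref{prop:proba}), then $X_n = T_{c_n}$, and likewise $X_m = T_{c_m}$, with $c_m < c_n$ because $n\ge m+2$ and every block has length at most $2$. Two properties of $(c_m,c_n)$ are then combined. First, the $n-m+1$ positions $m,\dots,n$ lie in the $\delta+1$ consecutive blocks $c_m,\dots,c_n$, each of length at most $2$, so $2(\delta+1)\ge n-m+1$; thus $\delta := c_n-c_m \ge \frac{n-m-1}{2}$ \emph{deterministically}. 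Second, setting $\varepsilon_i := 2T_i-3 \in\{-1,1\}$ and $\mathcal F_\ell := \sigma(T_1,\dots,T_\ell)$, the symmetry of the chain gives $\E[\varepsilon_{\ell+1}\mid\mathcal F_\ell] = (1-2p)\varepsilon_\ell$, so that $M_\ell := (1-2p)^{-\ell}\varepsilon_\ell$ is a martingale. Since $\tilde X_m\tilde X_n = \tfrac14\varepsilon_{c_m}\varepsilon_{c_n}$, decompose
$$\E(\tilde X_m\tilde X_n) = \tfrac14\sum_{1\le a<b}\E\big(\varepsilon_a\varepsilon_b\,\mathbf{1}_{E_{a,b}}\big), \qquad E_{a,b} := \{c_m=a,\ c_n=b\},$$
and observe that $E_{a,b}$ is measurable with respect to $\mathcal F_{J_b}$, where $J_b := \min\{j : |w_j|\ge b\}$ is a stopping time for $\T$: the block lengths $X_1,\dots,X_b$ (hence the blocks containing the $m$\textsuperscript{th} and $n$\textsuperscript{th} letters) are all determined as soon as $|w_j|\ge b$. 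Applying the optional stopping theorem to $M$ at the bounded stopping time $\sigma_{a,b} := \max(a,J_b)\le b$ gives $\E[\varepsilon_b\mid\mathcal F_{\sigma_{a,b}}] = (1-2p)^{b-\sigma_{a,b}}\varepsilon_{\sigma_{a,b}}$; since $\varepsilon_a\mathbf{1}_{E_{a,b}}$ is $\mathcal F_{\sigma_{a,b}}$-measurable and $|\varepsilon_i|=1$, we get $|\E(\varepsilon_a\varepsilon_b\mathbf{1}_{E_{a,b}})|\le\E(|1-2p|^{b-\sigma_{a,b}}\mathbf{1}_{E_{a,b}})$. Summing over $a<b$ and noting that $b-\sigma_{a,b} = \min(c_n-c_m,\ c_n-J_{c_n})$ on $E_{a,b}$,
$$\big|\E(\tilde X_m\tilde X_n)\big| \le \tfrac14\,\E\Big(|1-2p|^{\,\min(\delta,\,D)}\Big), \qquad D := c_n - J_{c_n},$$
where $D$ is exactly the \emph{advance-knowledge gap} analysed in the proof of Theorem~\ref{thm:markov}.

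It remains to convert the right-hand side into a summable $\Phi(|n-m|)$. We already have $\delta\ge\frac{n-m-1}{2}$. As for $D$, the proof of Theorem~\ref{thm:markov} shows that $D\ge\ell$ as soon as $\T$ contains at least $2\ell$ occurrences of the letter $2$ among $T_1,\dots,T_{\lfloor n/2\rfloor}$; hence $\P(D<\ell)$ is at most the probability that the number of $2$'s in this prefix falls below $2\ell$, which, for the uniformly fast-mixing two-state chain $\T$, is $e^{-cn}$ for some $c=c(p)>0$ as soon as $2\ell$ is a small enough multiple of $n$ (say $\ell=\lceil n/16\rceil$), by a standard Chernoff-type concentration inequality for additive functionals of Markov chains. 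On the event $\{D\ge\lceil n/16\rceil\}$ one has, using $n\ge n-m$, that $\min(\delta,D)\ge\min(\frac{n-m-1}{2},\frac{n}{16})\ge\frac{n-m}{16}$ whenever $n-m\ge 2$; on the complementary event the above bound is at most $e^{-cn}\le e^{-c(n-m)}$. Therefore $\E(\tilde X_m\tilde X_n)\le\frac14\big(|1-2p|^{(n-m)/16}+e^{-c(n-m)}\big)$ for $n-m\ge 2$, and taking $\Phi(0)=\Phi(1)=1$, $\Phi(k)=\frac14\big(|1-2p|^{k/16}+e^{-ck}\big)$ for $k\ge 2$ meets the hypotheses of Theorem~\ref{thm:Lyons}, which concludes.

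I expect the main obstacle to be the rigorous justification of the two measurability facts underpinning the correlation bound: that $J_b$ is a stopping time and the block partition up to block $b$ is revealed by time $J_b$ (so that $E_{a,b}\in\mathcal F_{J_b}$), and that $\varepsilon_a\mathbf{1}_{E_{a,b}}$ may be \emph{frozen} outside the martingale increments $\varepsilon_{\sigma_{a,b}+1},\dots,\varepsilon_b$ peeled off by the optional stopping step. What makes the Markovian case genuinely harder than the i.i.d.\ one is precisely the feedback of the chain on the partition: the letter $T_{c_m}$ filling the $m$\textsuperscript{th} block also affects the lengths of the subsequent blocks, hence which block contains the $n$\textsuperscript{th} letter, so $X_m$ and $X_n$ are correlated both through the chain and through the random block partition itself. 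Optional stopping disentangles the two because the partition near position $n$ is pinned down at time $\sigma_{a,b}\le\max(a,J_{c_n})$, strictly before the index $c_n$ is reached, leaving the Markov property free to act on the remaining increments; but a careful write-up must handle the random nature of the stopping times (in particular the degenerate \emph{all $1$'s so far} prefixes, of exponentially small probability, on which $J_b=b$ and no gain is made) and --- crucially --- it must deduce the exponential lower tail of $D$ from the directing chain $\T$ alone, never from $\X$, so as not to be circular with the very statement being proved.
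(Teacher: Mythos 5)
First, note that the paper does not prove this statement: it is precisely the conjecture the authors leave open, saying that the computation of the correlations is much more intricate in the Markovian case. So there is no proof of the paper to compare yours with, and I assess your argument on its own merits. Its core --- writing $\tilde X_m\tilde X_n=\tfrac14\varepsilon_{c_m}\varepsilon_{c_n}$ and controlling $\E\big(\varepsilon_a\varepsilon_b\mathbf{1}_{\{c_m=a,\,c_n=b\}}\big)$ by stopping the martingale $M_\ell=(1-2p)^{-\ell}\varepsilon_\ell$ at $\sigma_{a,b}=\max(a,J_b)$ --- checks out: $\{J_b\le j\}=\{|w_j|\ge b\}\in\mathcal{F}_j$, the block lengths $X_1,\dots,X_b$ are $\mathcal{F}_{J_b}$-measurable, hence $\{c_m=a,\,c_n=b\}\in\mathcal{F}_{J_b}\subseteq\mathcal{F}_{\sigma_{a,b}}$, while $\varepsilon_a$ is $\mathcal{F}_a$-measurable and $a\le\sigma_{a,b}\le b$, so optional stopping at the bounded stopping time $\sigma_{a,b}$ legitimately yields $|\E(\tilde X_m\tilde X_n)|\le\tfrac14\E\big(|1-2p|^{\min(\delta,D)}\big)$, with $\delta\ge\tfrac{n-m-1}{2}$ deterministic; the case $p=\tfrac12$ is indeed degenerate but easy. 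This correlation bound is exactly the ingredient the authors say they were missing, so if the remaining tail estimate is carried out carefully, your argument settles the conjecture.

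That remaining step is the one place where the write-up, as it stands, does not go through. You import from the proof of Theorem~\ref{thm:markov} the claim that $2\ell$ occurrences of $2$ among $T_1,\dots,T_{\lfloor n/2\rfloor}$ force $D\ge\ell$; this implicitly assumes $J_{c_n}\ge\lfloor n/2\rfloor$, whereas in general one only has $J_{c_n}\ge n/4$ (after $j$ steps the blocks of known length cover at most $\sum_{i\le|w_j|}X_i\le 2|w_j|\le 4j$ positions, with near equality when $\T$ is constantly $2$); the same imprecision sits in the paper's proof of Theorem~\ref{thm:markov}, where it is harmless because $\ell$ is fixed. The correct statement, obtained from the covered length $C_{k-1}:=\sum_{i\le|w_{k-1}|}X_i=|w_{k-1}|+\#\{2\text{'s in }w_{k-1}\}<n$ at $k=J_{c_n}\ge\lceil n/4\rceil$, is that at least $2\ell+2$ letters $2$ among $T_1,\dots,T_{\lceil n/4\rceil-1}$ force $D\ge\ell$. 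This is not cosmetic for you: with $\ell=\lceil n/16\rceil$ the required $2\ell\approx n/8$ letters $2$ is exactly the mean count in a prefix of length $n/4$, so no Chernoff bound gives an exponentially small failure probability there. The repair is only a matter of constants: take for instance $\ell=\lceil qn/32\rceil$ with $q=\min(p,1-p)$, observe that the number of $2$'s among $T_2,\dots,T_N$ stochastically dominates a Binomial$(N-1,q)$ variable (the conditional probability of writing a $2$ at each step is at least $q$), and conclude $\P(D<\ell)\le Ce^{-cn}\le Ce^{-c(n-m)}$; the exponent in $\Phi$ changes, its summability does not, and Theorem~\ref{thm:Lyons} then applies as you intended.
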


Observe that Theorem~\ref{thm:markov} and Corollary~\ref{cor:markov} easily extend to other alphabets. In particular, one obtain an identical result over the alphabet $\{1,3\}$: if $\T$ is a Markov chain with transition probability $0< p<1$ from 1 to 3 (and from 3 to 1), then the average density of $1$'s is $1/2$. 

\begin{theorem}\label{thm:markov2}
	Let $a\geq 2$ be an integer, let $ p \in ]0,1[$ and let $\mathbb{T}$ be a Markov chain over the alphabet  $\{1,a\}$ with initial value $T_1=1$ and whose transition probability from $1$ to $a$ (and from $a$ to $1$) is $p \in ]0,1[$. Then 
	$$\displaystyle \lim_{n\to \infty}{\mathbb{P}(X_n = 1)} = \dfrac{1}{2} \text{ and } \lim_{n \to \infty} \dfrac{\mathbb{E}(|X_1\dots X_n|_1)}{n} = \dfrac{1}{2}.$$
\end{theorem}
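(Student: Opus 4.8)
The plan is to imitate the proof of Theorem~\ref{thm:markov} almost verbatim, replacing the letter $2$ by the letter $a$ throughout, and checking that the two quantitative ingredients used there survive the change. The first ingredient is the two-state chain computation: for the Markov chain $\mathbb{T}$ on $\{1,a\}$ with symmetric transition probability $p$, one still has $\mathbb{P}(T_s=1\mid T_r=1)=\tfrac12(1+(1-2p)^{s-r})$ and $\mathbb{P}(T_s=1\mid T_r=a)=\tfrac12(1-(1-2p)^{s-r})$ for all $s>r\geq1$, since this depends only on the transition matrix $\begin{pmatrix}1-p & p\\ p & 1-p\end{pmatrix}$ and not on the names of the letters. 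The second ingredient is the geometric-decoupling estimate: if $X_n$ is written at step $k'$ (so $X_n=T_{k'}$) and $k$ is the index with $(T_1,\dots,T_n)\in\mathcal{S}_{n,k}$, then conditionally on $D=k-k'\geq\ell$ the law of $T_{k'}$ given $T_k$ is within $\tfrac12(1\pm|1-2p|^\ell)$ of uniform, exactly as before.

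The one place where $a\geq 2$ (rather than $a=2$) must be used is the claim that $D$ is large with high probability. Here I would argue as follows. If among $T_1,\dots,T_{\lfloor n/2\rfloor}$ there are at least $2\ell$ occurrences of the letter $a$, then there are at least $2\ell$ occurrences of $a$ among $T_1,\dots,T_k$ (since $k\geq\lfloor n/2\rfloor$ once $n$ is large, because each step writes at least one letter, so $|w_k|\geq k$ forces $k\leq n$; more carefully, $k\geq\lfloor n/2\rfloor$ because $w_{\lfloor n/2\rfloor}$ has length at most $a\lfloor n/2\rfloor$... I would instead note directly that $\mathcal{S}_{n,k}$ with $k<\lfloor n/2\rfloor$ is impossible once $|w_{\lfloor n/2\rfloor}|<n$, which holds unless many letters are $a$). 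Consequently $w_k$ contains at least $2\ell$ copies of $a$, hence $|w_k|\geq (k - 2\ell) + a\cdot 2\ell = k + 2\ell(a-1) \geq n$ would be too much; the correct bookkeeping is that the $n$-th letter lies beyond position $|w_k|$, and the $2\ell$ long blocks of length $a\geq2$ already past position $|w_{k-1}|$ guarantee at least $\ell$ fully-determined empty blocks strictly between $|w_k|$ and $n$, so $D\geq\ell$. Thus $\mathbb{P}(D<\ell)\leq \mathbb{P}(\text{fewer than }2\ell\text{ copies of }a\text{ among }T_1,\dots,T_{\lfloor n/2\rfloor})$, and since the chain is irreducible and positive recurrent with both letters visited, this probability tends to $0$ as $n\to\infty$ for each fixed $\ell$ (the number of $a$'s among the first $m$ states grows linearly in $m$ almost surely, hence in probability).

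With these two facts in hand, the chain of inequalities in the proof of Theorem~\ref{thm:markov} goes through unchanged: $\mathbb{P}(X_n=1)=\mathbb{P}(X_n=1,D\geq\ell)+\mathbb{P}(X_n=1,D<\ell)$, the first term lies in $[\tfrac12(1-|1-2p|^\ell)\mathbb{P}(D\geq\ell),\ \tfrac12(1+|1-2p|^\ell)\mathbb{P}(D\geq\ell)]$ and the second is at most $\mathbb{P}(D<\ell)\to0$, so $\limsup_n\mathbb{P}(X_n=1)\leq\tfrac12(1+|1-2p|^\ell)$ and $\liminf_n\mathbb{P}(X_n=1)\geq\tfrac12(1-|1-2p|^\ell)$; letting $\ell\to\infty$ gives $\lim_n\mathbb{P}(X_n=1)=\tfrac12$. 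The Ces\`aro-average statement then follows exactly as in Corollary~\ref{cor:markov}: from $\lim_n\mathbb{P}(X_n=1)=\tfrac12$ one gets $\lim_n\mathbb{E}(X_n)=\tfrac12(1)+\tfrac12(a)=\tfrac{1+a}{2}$, and since $|X_1\dots X_n|_1 = \frac{a n - (X_1+\dots+X_n)}{a-1}$ for a word over $\{1,a\}$, dividing by $n$ and applying the Ces\`aro lemma yields $\lim_n \frac{\mathbb{E}(|X_1\dots X_n|_1)}{n} = \frac{a - \frac{1+a}{2}}{a-1} = \frac12$.

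The main obstacle I anticipate is purely one of careful bookkeeping rather than a genuinely new idea: making the inequality ``enough copies of $a$ in the prefix $\Rightarrow$ $D\geq\ell$'' fully rigorous requires tracking how occurrences of the long letter $a$ in $(T_1,\dots,T_k)$ translate into fully-known empty blocks lying strictly to the right of $w_k$ but at or before position $n$, and one must also confirm that $k\geq\lfloor n/2\rfloor$ (or an analogous lower bound on $k$) holds on the relevant event so that counting $a$'s among $T_1,\dots,T_{\lfloor n/2\rfloor}$ legitimately lower-bounds the count among $T_1,\dots,T_k$. Everything else is a transcription of the $a=2$ argument.
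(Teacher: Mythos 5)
Your proposal follows exactly the route the paper intends: the paper offers no separate proof of this theorem, saying only that Theorem~\ref{thm:markov} and Corollary~\ref{cor:markov} ``easily extend to other alphabets'', and your adaptation --- same symmetric two-state mixing bound, same decoupling via the event $D\geq\ell$, plus the correct relation $|X_1\dots X_n|_1=\frac{an-(X_1+\dots+X_n)}{a-1}$ for the Ces\`aro step --- is precisely that extension, with the role of $a\geq 2$ correctly located in the ``many occurrences of $a$ force $D\geq\ell$'' estimate. The bookkeeping you flag (whether $k\geq\lfloor n/2\rfloor$) is a real but minor wrinkle already present in the paper's own $a=2$ argument, and is repaired either by replacing $\lfloor n/2\rfloor$ with the deterministic lower bound $\lceil n/a^{2}\rceil\leq k$, or by observing that $k<\lfloor n/2\rfloor$ itself forces many occurrences of $a$ among $T_1,\dots,T_k$; since only a fixed finite number of occurrences is needed for each $\ell$, the probabilistic conclusion is unchanged.
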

The statement of Theorem \ref{thm:markov2} is somewhat surprising and unexpected since we know that the densities $d_1$ and $d_3$ of the letters $1$ and $3$ in the sequences $\mathcal{O}_{1,3}$ and $\mathcal{O}_{3,1}$ are respectively $d_1 \approx 0,40$ and $d_3 \approx 0,60$. \cite{Sing04}. We will come back to this in the discussion of Section~\ref{sec:discussion}.

	\section{Non conservation of the density}
	
	In previous sections, we have studied different cases where the directing sequences are random. In all the cases we considered (sequences of independent and identically distributed random variables, Markovian sequences), the densities of letters of the directed sequence obtained are the same as those in the directing sequence, almost surely. 
	
	Simulations also suggest that for any (infinite) periodic sequence $T$, the density of $1$'s in directed sequence $\mathcal{O}_T$ is well-defined and is equal to the density of $1$'s in $T$, see Figure~\ref{fig:periodic}. 
	
	\begin{figure}[!h]
		\begin{center}
			\includegraphics[width=.5\textwidth]{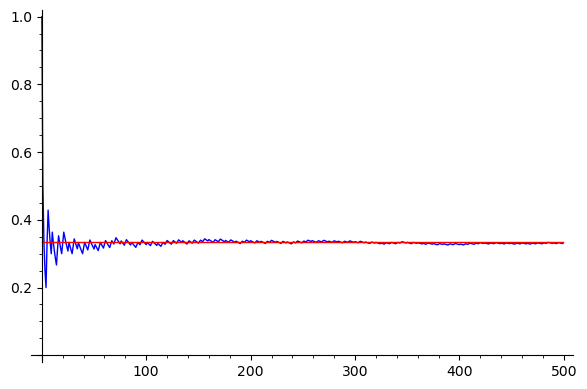}\includegraphics[width=.5\textwidth]{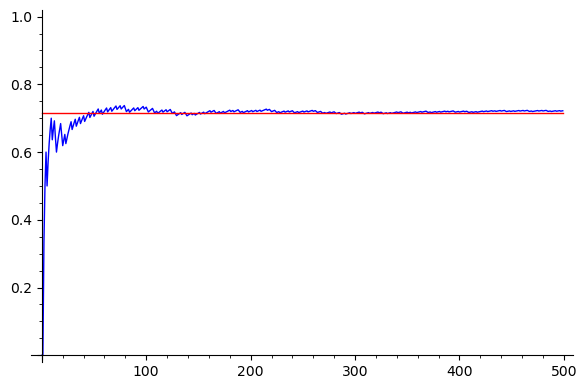}
		\end{center}
		\caption{Evolution of the density of $1$'s in increasingly large prefixes of $\mathcal{O}_T$ for $T=(122)^{\omega}$ (left) and $T=(2112111)^{\omega}$ (right). The densities seem to converge respectively to $1/3$ and to $5/7$.}
		\label{fig:periodic}
	\end{figure}

	On Figure~\ref{fig:periodic}, we have chosen to represent only the data on short prefixes of $\mathcal{O}_T$  so that it remains usable, especially to distinguish the densities in the very first terms of the sequence $ \mathcal{O}_T$. However, further experiments have been carried out on a large number of periodic sequences $T$ and they seem to corroborate our first impression, namely that if the sequence $T$ is periodic then the densities in $\mathcal{O}_T$ would be the same as those in $T$. This leads us to state the following conjecture, that extends Keane's conjecture.
	\begin{conjecture}\label{conj:conj2} For any periodic sequence $T$ over the alphabet $\{1,2\}$, the density of $1$'s in the directed sequence $\mathcal{O}_T$ is well-defined and is equal to the density of $1$'s in $T$.
	\end{conjecture}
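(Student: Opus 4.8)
We do not have a proof of Conjecture~\ref{conj:conj2}, but here is the strategy we believe to be the natural one; it reduces the statement to a single equidistribution property of $\mathcal{O}_T$ which, as we explain, is the genuine obstruction. Fix a periodic directing sequence $T=(t_i)_{i\ge 1}$ of minimal period $q$, and let $\rho=\frac1q\,|t_1\cdots t_q|_1$ be the (well-defined) density of $1$'s in $T$. Write $X=(x_i)_{i\ge1}=\mathcal{O}_T$ and recall that $x_i\in\{1,2\}$, that the $i$-th block of $X$ has length $x_i$ and consists of the letter $t_i$ repeated $x_i$ times, and put $A_N=\sum_{i=1}^N x_i=|w_N|$, so that $N\le A_N\le 2N$. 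Since the first $N$ blocks occupy exactly the first $A_N$ letters of $X$, the number of $1$'s among them is $B_N:=\sum_{i\le N,\;t_i=1}x_i$. \textbf{Step 1 (reduction to block boundaries).} For any $M$ with $A_N\le M<A_{N+1}$ we have $\big|\,|x_1\cdots x_M|_1-B_N\,\big|\le x_{N+1}\le 2$ and $|M-A_N|\le 2$, so $|x_1\cdots x_M|_1/M$ and $B_N/A_N$ differ by $O(1/M)$; hence the density of $1$'s in $\mathcal{O}_T$ is well-defined and equal to $\rho$ if and only if $B_N/A_N\to\rho$.

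\textbf{Step 2 (averaging over a period).} For $j\in\{1,\dots,q\}$ and $K\ge1$ set $S_j(K)=\sum_{k=0}^{K-1}x_{kq+j}$. Taking $N=Kq$ (the general case differs by an $O(q)$ boundary term, negligible since $A_N\ge N\to\infty$) we get $A_N=\sum_{j=1}^q S_j(K)\ge qK$ and $B_N=\sum_{j:\,t_j=1}S_j(K)$, so, writing $c_j=\mathbf 1_{\{t_j=1\}}-\rho$ and using $\sum_{j=1}^q c_j=0$,
\[
\frac{B_N}{A_N}-\rho=\frac{\sum_{j=1}^q c_j\,S_j(K)}{\sum_{j=1}^q S_j(K)}=\frac{\sum_{j=1}^q c_j\big(S_j(K)-S_1(K)\big)}{\sum_{j=1}^q S_j(K)}.
\]
Consequently $B_N/A_N\to\rho$ as soon as $\max_{1\le j,j'\le q}|S_j(K)-S_{j'}(K)|=o(K)$, i.e. as soon as every residue class modulo $q$ carries asymptotically the same total block-length. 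Note that no convergence of $S_j(K)/K$ itself is required, only the asymptotic equality across residue classes; so Steps 1--2 are elementary and the whole content of the conjecture is concentrated in this last statement.

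\textbf{Step 3 (the crux).} It remains to prove $\max_{j,j'}|S_j(K)-S_{j'}(K)|=o(K)$. Since $x_i$ is literally the $i$-th letter of $X=\mathcal{O}_T$, this says that $\mathcal{O}_T$ is balanced modulo $q$ in the Ces\`aro sense, equivalently that the positions of the letter $2$ in $\mathcal{O}_T$ equidistribute among the residue classes modulo $q$. This is the point where the self-referential nature of the construction must be used essentially, and it is where I expect the real difficulty to lie: for $T=(12)^\omega$ (so $q=2$), Step 3 asserts exactly that $S_{\mathrm{odd}}(K)\sim S_{\mathrm{even}}(K)$, i.e. that the density of $1$'s in the Oldenburger--Kolakoski sequence is $1/2$, which is Keane's conjecture. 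Thus Conjecture~\ref{conj:conj2} contains Keane's conjecture as a special case and cannot be settled without new ideas; Steps 1--2 should be read as isolating precisely the combinatorial/dynamical input that remains open.

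\textbf{Possible lines of attack for Step 3.} One natural attempt is to encode the construction as a renewal-type process whose state records the residue modulo $q$ of the reading index together with the offset between the writing head and the index of $X$ currently supplying block lengths, and then to establish a mixing property forcing every residue class to receive asymptotically equal mass; one might also hope for an $S$-adic or ``random substitution'' presentation of $\mathcal{O}_T$ with good equidistribution properties (keeping in mind Carpi's result that $\mathcal{O}_{1,2}$ is not the fixed point of a non-degenerate morphism). A softer objective, in the spirit of Chv\'atal's transfer-matrix estimates for the Kolakoski sequence, would be to prove $|d_1-\rho|\le\delta(q)$ for an explicit small $\delta(q)$, which would give bounds on the density but not its exact value; obtaining $\delta(q)=0$, i.e. the full equidistribution of Step 3, appears to require genuinely new input.
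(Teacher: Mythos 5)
You were asked about Conjecture~\ref{conj:conj2}, which the paper itself does not prove: it is stated on the strength of simulations (Figure~\ref{fig:periodic}) and is explicitly presented as an extension of Keane's open question, so there is no proof in the paper to compare yours against --- and your submission, as you say yourself, is not a proof either. That said, your Steps 1 and 2 are correct and cleanly executed: since the $i$\textsuperscript{th} block of $\mathcal{O}_T$ has length $x_i$ and carries the letter $t_i$, the density of $1$'s in $\mathcal{O}_T$ exists and equals $\rho$ if and only if $B_N/A_N\to\rho$, and writing $c_j=\mathbf 1_{\{t_j=1\}}-\rho$ with $\sum_j c_j=0$, this holds as soon as $\sum_{j=1}^q c_j S_j(K)=o(K)$, for which the pairwise balance $\max_{j,j'}|S_j(K)-S_{j'}(K)|=o(K)$ suffices (the boundary terms you dismiss are indeed $O(1)$, and $A_N\geq qK$ makes the denominator harmless). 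Your identification of the special case $T=(12)^\omega$ is also accurate: there $\mathcal{O}_T=\mathcal{O}_{1,2}$, and via Step 1 your balance condition $S_{\mathrm{odd}}(K)-S_{\mathrm{even}}(K)=o(K)$ is genuinely equivalent to Keane's question, consistent with the paper's own remark that the conjecture extends Keane's.

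The genuine gap is Step 3, and it is the entire content of the statement: neither your renewal/mixing nor your $S$-adic suggestions are developed, and since the case $q=2$ is exactly Keane's conjecture, no soft argument of the kind sketched can close it --- so the conjecture remains untouched, which you acknowledge. Two refinements you should record. First, for $q\geq 3$ your Step 3 is stronger than necessary: only the single weighted combination $\sum_j c_j S_j(K)=o(K)$ is needed (for $T=(122)^\omega$, only $2S_1(K)-S_2(K)-S_3(K)=o(K)$), and this weaker, $T$-dependent balance might conceivably be more tractable for some periodic $T$ than full equidistribution of block lengths over all residue classes; conversely, do not claim the reverse implication (conjecture $\Rightarrow$ Step 3) for general $q$ --- you only have it for $q=2$. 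Second, be aware that your reduction, while a reasonable way to isolate the obstruction, is of the same nature as known reformulations of Keane's problem (balance of letter counts along arithmetic progressions of block indices), so presenting it as the ``natural'' route is fine, but it should not be mistaken for progress on the conjecture itself.
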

	
	Then, a natural question arises: does there exist a directing sequence $T$ over $\{1,2\}$ for which the density of $1$'s in $T$ is not conserved in $\mathcal{O}_T$? 
	Obviously, because of Conjecture \ref{conj:conj2}, we do not expect to find such a candidate of directing sequence among the periodic ones.
	
	However, we answer this question partially and positively thanks to the fact that the left-to-right reading of $\mathcal{O}_T$ provides the size of the blocks even further to the right (see Section \ref{sec:section2}).
	In a prospect of building step by step both sequences $T$ and $\mathcal{O}_T$, the knowledge of the length of not yet filled blocks of $\mathcal{O}_T$ could allow us to choose, in a fully arbitrary way, with which letter we will fill them and it could give us the opportunity to force the sequence $\mathcal{O}_T$ to contain relatively more 1's than the sequence $T$. 
	
	The main idea of our simultaneous construction scheme of $T$ and $\mathcal{O}_T$ can be summarized as follows: we initialize $T_1$ to $2$, then $\mathcal{O}_T=22\,\ub{??}$ and from now on, by reading $\mathcal{O}_T$ from left to right, we fill its blocks of size 2 with 1's and its blocks of size 1 with 1's and 2's alternatively.	The first steps in the simultaneous construction of $T$ and $\mathcal{O}_T$ are thus as follows (with the notation of Section \ref{sec:section2}):
	\begin{enumerate}[label={\textbf{Step \arabic*:}}]
		\item We set $\mathrm{T}^{(1)}=(2)$ and then $\mathcal{O}^{(1)} = \ub{22}\,\ub{??}$
		\item 
			\begin{enumerate}
				\item The empty block of $\mathcal{O}^{(1)}$ of size $2$ must be filled with $1$'s: $\mathcal{O}^{(2)} = \ub{22}\,\ub{11} \, \ub{?} \, \ub{?}$ 
				\item We set $\mathrm{T}^{(2)}=(2,1)$
			\end{enumerate}
		\item 
			\begin{enumerate}
				\item We fill the next block of $\mathcal{O}^{(2)}_T$ of size $1$ with $1$ : $\mathcal{O}^{(3)}=\ub{22}\,\ub{11}\,\ub{1}\,\ub{?} \, \ub{?}$
				\item Hence $\mathrm{T}^{(3)}=(2,1,1)$%
			\end{enumerate}
		\item 			
			\begin{enumerate}
				\item We fill the next block of $\mathcal{O}^{(3)}_T$ of size $1$ with $2$ : $\mathcal{O}^{(4)}=\ub{22}\,\ub{11}\,\ub{1}\,\ub{2} \, \ub{?} \, \ub{??}$
				\item Hence $\mathrm{T}^{(4)}=(2,1,1,2)$%
		\end{enumerate}
		\item and son on\dots
	\end{enumerate}

	For each $n \in \mathbb{N}^\star$, we have $\mathcal{O}^{(n)} =  \mathcal{O}_{\mathrm{T}^{(n)}}$.  Moreover, $\mathrm{T}^{(n)}$ is a prefix of $\mathrm{T}^{(n+1)}$ while $\mathcal{O}^{(n)}$ is a prefix of $\mathcal{O}^{(n+1)}$, then we set $T = \lim_{n \to \infty}{\mathrm{T}^{(n)}}$. It follows that $\mathcal{O}_T = \lim_{n\to\infty}{\mathcal{O}^{(n)}}$.

	Let us denote $T = (t_i)_{i \in \mathbb{N}}$ and  $\mathcal{O}_T = (x_i)_{i \in \mathbb{N}}$ with $t_i, x_i \in \{1,2\}$ for all $i \in \mathbb{N}$, then:
	\begin{enumerate}
		\item $\mathrm{T}^{(n)} = (t_1,\dots,t_n)$ and $|\mathrm{T}^{(n)}|=n$. 
		\item $\mathcal{O}^{(n)} = (t_i^{x_i})_{i \in [\![1,n]\!]}$ and $\mathcal{O}_T = (t_i^{x_i})_{i \in \mathbb{N^\star}}$.
		\item $|\mathrm{T}^{(n)}|_1  = |x_1 \dots x_n|_2 - 1 + \dfrac{1}{2}|x_1 \dots x_n|_1 + C_n$, with $C_n \in \{0,1\}$: indeed, the number of $1$'s in $\mathrm{T}^{(n)}$ is equal to the sum of the number of blocks of size $2$ in $\mathcal{O}^{(n)}$ (except the first block of $\mathcal{O}^{(n)}$ because of the initialisation of $\mathcal{O}^{(1)}$) and half of the number of blocks of size $1$ in $\mathcal{O}^{(n)}$. 
		 By construction, the number of blocks of size $1$ (resp. of size $2$) in $\mathcal{O}^{(n)}$ is equal to the number of $1$'s (resp. $2$'s) in $x_1\dots x_n$. The constant $C_n$ takes into account the cases where $x_n=1$  and is the first letter of a block of size 2 in $\mathcal{O}^{(n)}$.
	\end{enumerate}
	Program~\ref{lst::liste2} provides a \texttt{Python} function for the construction of $\mathcal{O}_{T}$ and $T$.
	
	\begin{minipage}{\linewidth}
		\begin{lstlisting}[caption={\texttt{Python} function for the simultaneous construction of $T$ and $\mathcal{O}_T$.},language=Python, label={lst::liste2}]
	def Sequences(n) :
		T = [2]
		O_T = [2, 2]
		d = 1 # digit to write in the next block of size 1
		for i in range(1, n) :
			if O_T[i] == 2 :
				T += [1]
				O_T += [1]*2
			else : 
				T += [d]
				O_T += [d]
				d = 3-d
		return (T, O_T)
	\end{lstlisting}
	\end{minipage}

\begin{theorem}
	Let $T = \lim_{n \to \infty}{\mathrm{T}^{(n)}}$. The following properties hold: 
	\begin{enumerate}
		\item If the density $d^T_1$ of $1$'s in $T$ exists, then 
		$d^T_1 = {1+\sqrt{17}\over 8}=0.640\ldots$, $d_1^{\O}={7-\sqrt{17}\over 4}=0.719\ldots$ and so $d^T_1 \neq d^{\O}_1$.
		\item If the density $d^T_1$ of $1$'s in $T$ exists, then the sequences $T$ and $\mathcal{O}_T$ are not periodic.
	\end{enumerate}
\end{theorem}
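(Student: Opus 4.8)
The plan is to extract two relations linking the frequency $d_1^T$ of $1$'s in $T$ to the frequency $d_1^{\mathcal{O}}$ of $1$'s in $\mathcal{O}_T$, solve the resulting system for both, and then read off aperiodicity from the shape of the answers. Throughout, write $a_n=|x_1\cdots x_n|_1$, $b_n=|x_1\cdots x_n|_2=n-a_n$, and $L_n=|\mathcal{O}^{(n)}|=\sum_{i=1}^n x_i=a_n+2b_n=2n-a_n$. Since $L_n\geq n$ we have $L_n\to\infty$, and since $\mathcal{O}^{(n)}$ is exactly the length-$L_n$ prefix of $\mathcal{O}_T$, we have $|\mathcal{O}^{(n)}|_1=a_{L_n}$.

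The first relation is just the identity $|\mathrm{T}^{(n)}|_1=|x_1\cdots x_n|_2-1+\tfrac12|x_1\cdots x_n|_1+C_n$ with $C_n=O(1)$ established just above the theorem, i.e. $|\mathrm{T}^{(n)}|_1=n-\tfrac12 a_n+O(1)$. Dividing by $n$, I would first observe that the mere existence of $d_1^T$ forces $d_1^{\mathcal{O}}:=\lim_n a_n/n$ to exist as well, with $d_1^T=1-\tfrac12 d_1^{\mathcal{O}}$. The second, genuinely self-referential relation comes from counting the $1$'s of $\mathcal{O}^{(n)}=t_1^{x_1}\cdots t_n^{x_n}$ directly: among indices $i\le n$ there are $b_n$ with $x_i=2$, and each such block is filled with $1$ except the very first one (filled with $2$ by the initialisation), contributing $2(b_n-1)$ ones; there are $a_n$ indices with $x_i=1$, and those length-$1$ blocks are filled alternately with $1$ and $2$, contributing $\lceil a_n/2\rceil=\tfrac12 a_n+O(1)$ ones. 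Hence $a_{L_n}=|\mathcal{O}^{(n)}|_1=2b_n+\tfrac12 a_n+O(1)=2n-\tfrac32 a_n+O(1)$. Dividing by $L_n=2n-a_n$, letting $n\to\infty$, and using that $a_{L_n}/L_n\to d_1^{\mathcal{O}}$ (a subsequence of a convergent sequence), I obtain $d_1^{\mathcal{O}}(2-d_1^{\mathcal{O}})=2(1-d_1^{\mathcal{O}})+\tfrac12 d_1^{\mathcal{O}}$, that is, $2(d_1^{\mathcal{O}})^2-7\,d_1^{\mathcal{O}}+4=0$. Of the two roots $\tfrac{7\pm\sqrt{17}}{4}$ only $\tfrac{7-\sqrt{17}}{4}=0.719\ldots$ lies in $[0,1]$, so $d_1^{\mathcal{O}}=\tfrac{7-\sqrt{17}}{4}$ and then $d_1^T=1-\tfrac12 d_1^{\mathcal{O}}=\tfrac{1+\sqrt{17}}{8}=0.640\ldots$; these differ, since equality would force $3\sqrt{17}=13$. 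This is part~(1).

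Part~(2) is then immediate. Both values $\tfrac{1+\sqrt{17}}{8}$ and $\tfrac{7-\sqrt{17}}{4}$ are irrational, as $17$ is not a perfect square, whereas any eventually periodic sequence over $\{1,2\}$ has a rational frequency of $1$'s. Since, under the hypothesis, part~(1) shows $T$ has frequency $\tfrac{1+\sqrt{17}}{8}$ of $1$'s and $\mathcal{O}_T$ has frequency $\tfrac{7-\sqrt{17}}{4}$, neither $T$ nor $\mathcal{O}_T$ can be periodic.

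The main obstacle I anticipate is making the self-referential count airtight: one has to check that every exceptional contribution — the single length-$2$ block filled with $2$, the parity term hidden in $\lceil a_n/2\rceil$, and the boundary term $C_n$ of the identity used above — is uniformly $O(1)$, hence vanishes after division by $n$, and that passing to the limit along the subsequence $(L_n)$ is legitimate. Both points reduce to the facts that $L_n\to\infty$ and that $a_n/n$ converges, so once those are in hand the rest is bookkeeping.
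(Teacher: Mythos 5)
Your proposal is correct and follows essentially the same route as the paper: both proofs extract two asymptotic counting relations from the construction (the identity $|\mathrm{T}^{(n)}|_1 = |x_1\dots x_n|_2 - 1 + \tfrac12 |x_1\dots x_n|_1 + C_n$, and a block-by-block count of the $1$'s written into $\mathcal{O}^{(n)}$), pass to the limit, solve the resulting quadratic, and obtain part (2) from the irrationality of the densities. The only difference is bookkeeping: you normalize the second count by $L_n = 2n - a_n$ and use the prefix identity $|\mathcal{O}^{(n)}|_1 = a_{L_n}$ to land directly on a quadratic in $d_1^{\mathcal{O}}$, whereas the paper expresses $|\mathcal{O}^{(n)}|_1$ and $|\mathcal{O}^{(n)}|_2$ through $|\mathrm{T}^{(n)}|_1$ and $|\mathrm{T}^{(n)}|_2$ to get $d_1^{\mathcal{O}} = (3d_1^T-1)/(2d_1^T)$ before combining with the same identity; the two systems are equivalent, and your explicit treatment of the existence of $\lim_n a_n/n$ and of the passage along the lengths $L_n$ is, if anything, slightly more careful than the paper's.
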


\begin{proof}
	\begin{enumerate}
		\item 
	For each $n \in \mathbb{N}^\star$, we have:
	\begin{equation*}
	\Big | \; 
	|\mathcal{O}^{(n)}|_1 - |{\mathrm{T}^{(n)}}|_2-2(|{\mathrm{T}^{(n)}}|_1-|{\mathrm{T}^{(n)}}|_2) \; \Big |\leq 1 
	\end{equation*}
Indeed, to within one unit, each digit $2$ of ${\mathrm{T}^{(n)}}$ gives rise to a single $2$ in $\mathcal{O}^{(n)}$, and a same quantity $|{\mathrm{T}^{(n)}}|_2$ of $1$'s gives rise to a single $1$ in $\mathcal{O}^{(n)}$, while the rest of them (so $|{\mathrm{T}^{(n)}}|_1-|{\mathrm{T}^{(n)}}|_2$) give rise to two $1$'s in $\mathcal{O}^{(n)}$. 
	Moreover the first $2$ of ${\mathrm{T}^{(n)}}$ is the only one to be written twice in $\mathcal{O}^{(n)}$, so that we always have exactly $
	|\mathcal{O}^{(n)}|_2= |{\mathrm{T}^{(n)}}|_2+1$. Then, 
	\begin{eqnarray*}
	\dfrac{|\mathcal{O}^{(n)}|_1}{|\mathcal{O}^{(n)}|} & = & 	\dfrac{|\mathcal{O}^{(n)}|_1}{|\mathcal{O}^{(n)}|_1 + |\mathcal{O}^{(n)}|_2} 
	= 		\dfrac{2|T^{(n)}|_1 - |T^{(n)}|_2 +o(n)}{-1+|T^{(n)}|_2 + 2(|T^{(n)}|_1- |T^{(n)}|_2) + |T^{(n)}|_2+1    +o(n)) } \\
	& = & 	  \dfrac{3|T^{(n)}|_1 - |T^{(n)}| + o(n)}{2|T^{(n)}|_1 +o(n)} \underset{n \to \infty}\longrightarrow \dfrac{3d^T_1-1}{2d^T_1}
	\end{eqnarray*}	
We conclude that 
\begin{equation}\label{eq:eq5}
d^{\O}_1  = 	\dfrac{3d^T_1-1}{2d^T_1}
\end{equation} and the density of $1$'s (resp. of $2$'s) in $\mathcal{O}_T$ exists.	

We noticed above that, for each $n \in \mathbb{N}^\star$, $|\mathrm{T}^{(n)}|_1  = |x_1 \dots x_n|_2 - 1 + \dfrac{1}{2}|x_1 \dots x_n|_1 + C_n$, with $C_n \in \{0,1\}$. Moreover, if $d^T_1$ exists then so do $d^O_1$ and $d^O_2$ and, by tending $n$ towards infinity, we have:
\begin{equation}\label{eq:eq6}
d^T_1 = d^O_2+\dfrac{1}{2}d^O_1
\end{equation}
	By putting together equations \eqref{eq:eq5} and \eqref{eq:eq6}, we deduce	$d_1^T={1+\sqrt{17}\over 8}$ and  $d_1^{\O}={7-\sqrt{17}\over 4}$.
\item If the sequences $T$ and $\mathcal{O}_T$ were periodic, then their densities of $1$'s and $2$'s would be rational, which is not the case.
	\end{enumerate}

\end{proof}
Simulations suggest that the densities are indeed converging to these values, see Figure~\ref{fig:WT}.
	\begin{figure}
		\begin{center}
			\includegraphics[width=.6\textwidth]{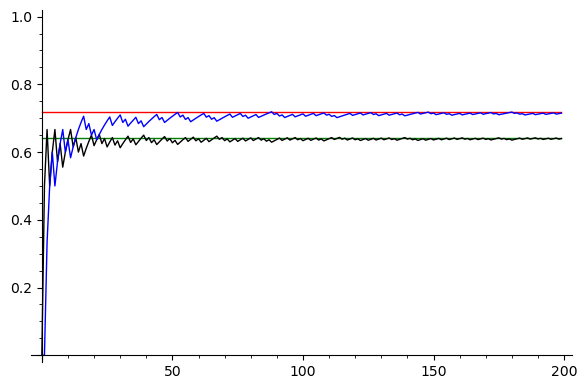}
		\end{center}
		\caption{Evolution of the densities of $1$'s in $\O_T$ (blue) and $T$ (black), where the two sequences are defined by Program~\ref{lst::liste2}.}
		\label{fig:WT}
	\end{figure}

	\section{Conclusion and discussion}\label{sec:discussion}
	Over the alphabet $\{1,a\}$, with $a \in \{2,3\}$, we have shown that in almost all the sequences directed by an infinite sequence  $\mathbb{T} = (T_n)_{n \in \bbN^\star}$  of i.i.d. random variables with $\mathbb{P}(T_n = 1) = p \in ]0,1[$ and $\mathbb{P}(T_n = a) = 1-p$, the density of $1$'s is equal to $p$.
	We have also shown that the average density of $1$'s among all sequences directed by a Markov chain with transition probability $p \in ]0,1[$  from $1$ to $a$ and from $a$ to 1 is equal to $1/2$.

	Keane's conjecture \cite{Keane91} states that this result can be extended to the deterministic case, namely when $p=1$, over the alphabet $\{1,2\}$. On the other hand, over the alphabet $\{1,3\}$ this result is not extendable  to the deterministic case since the density of $1$'s in $\mathcal{O}_{1,3}$ is close to 0.3972. \cite{Sing04}. 
	
	When $\mathbb{T}$ is a Markov chain, the closer its transition probability $p$ is to $1$, the more likely the sequence $\mathcal{O}_T$ is to share a long prefix with $\mathcal{O}_{1,3}$. Therefore, the closer the transition probability $p$ is to 1, the closer the density of 1's in the sequence $\mathcal{O}_T$ is to that in the sequence $\mathcal{O}_{1,3}$ on a long prefix. However, computer experiments suggest that when the first perturbations in the alternation of 1's and 3's appear in $\mathbb{T}$, the density of $1$'s in the prefix of $\mathcal{O}_\mathbb{T}$ eventually approaches 0.5 as this prefix gets longer. See Figure~\ref{fig:M13} for an illustration with $p=0.99$. 
	
	\begin{figure}[!h]
		\begin{center}
			\includegraphics[width=.6\textwidth]{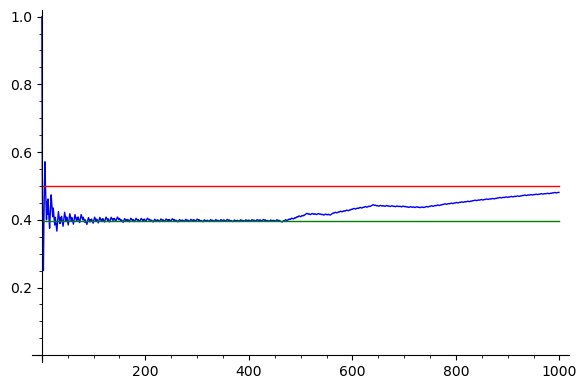}
		\end{center}
		\caption{Evolution of the frequency of $1$'s for a markovian directing sequence on the alphabet $\{1,3\}$ of parameter $p=0.99$: the frequency is first close to the one of $\mathcal{O}_{1,3}$ then moves away from it to converge to $1/2$.}
		\label{fig:M13}
	\end{figure}

	This implies it seems difficult to derive information about the original Oldenburger-Kolakoski sequence $\mathcal{O}_{1,2}$ by letting $p$ tend to $1$ in the Markovian case over the alphabet $\{1,2\}$.
	
	Finally, the study of sequences directed by random sequences on alphabets of more than 2 letters or by random sequences constructed from other distributions also seems interesting.


	\bibliography{mybiblio}{}
	\bibliographystyle{plain}
	
\end{document}